\newtheorem{corollary}{Corollary}
\newtheorem{result}{Result}
\newtheorem*{rep@theorem}{\rep@title}
\newcommand{\newreptheorem}[2]{%
    \newenvironment{rep#1}[1]{%
        \def\rep@title{#2 \ref*{##1}}%
        \begin{rep@theorem}}%
    {\end{rep@theorem}}}
\newcommand{\ket}[1]{| #1 \rangle}
\newcommand{\bra}[1]{\langle #1 |}
\newcommand{\braket}[2]{\langle #1 | #2 \rangle}
\newcommand{\ketbra}[2]{| #1 \rangle \langle #2 |}
\newcommand{\fv}[3]{\langle #1 | #2 | #3 \rangle}
\DeclareMathOperator{\End}{End}
\DeclareMathOperator{\fec}{vec}
\DeclareMathOperator{\tr}{tr}
\DeclareMathOperator{\Wg}{Wg}
\let\oldl\left
\let\oldr\right
\renewcommand{\left}{\mathopen{}\mathclose\bgroup\oldl}
\renewcommand{\right}{\aftergroup\egroup\oldr}
\begin{document}

\title{Correlations in Disordered Solvable Tensor Network States}

\author{Daniel Haag}

\email{daniel.haag@tum.de}

\affiliation{Department of Computer Science, Technische Universit{\"{a}}t M{\"{u}}nchen, Boltzmannstr.~3, 85748 Garching, Germany}

\affiliation{Munich Center for Quantum Science and Technology (MCQST), Schellingstr.~4, 80799 M{\"{u}}nchen, Germany}

\author{Richard M. Milbradt}

\email{r.milbradt@tum.de}

\affiliation{Department of Computer Science, Technische Universit{\"{a}}t M{\"{u}}nchen, Boltzmannstr.~3, 85748 Garching, Germany}

\affiliation{Munich Center for Quantum Science and Technology (MCQST), Schellingstr.~4, 80799 M{\"{u}}nchen, Germany}

\author{Christian B. Mendl}

\email{christian.mendl@tum.de}

\affiliation{Department of Computer Science, Technische Universit{\"{a}}t M{\"{u}}nchen, Boltzmannstr.~3, 85748 Garching, Germany}

\affiliation{Munich Center for Quantum Science and Technology (MCQST), Schellingstr.~4, 80799 M{\"{u}}nchen, Germany}

\affiliation{Institute for Advanced Study, Technische Universit{\"{a}}t M{\"{u}}nchen, Lichtenbergstr.~2~a, 85748 Garching, Germany}

\date{\today}

\begin{abstract}
    Solvable matrix product and projected entangled pair states evolved by dual and ternary-unitary quantum circuits have analytically accessible correlation functions. Here, we investigate the influence of disorder. Specifically, we compute the average behavior of a physically motivated two-point equal-time correlation function with respect to random disordered solvable tensor network states arising from the Haar measure on the unitary group. By employing the Weingarten calculus, we provide an exact analytical expression for the average of the \( k \)th moment of the correlation function. The complexity of the expression scales with \( k! \) and is independent of the complexity of the underlying tensor network state. Our result implies that the correlation function vanishes on average, while its covariance is nonzero.
\end{abstract}

\maketitle

\onecolumngrid

\begin{figure}[t]
	\centering
	\includegraphics{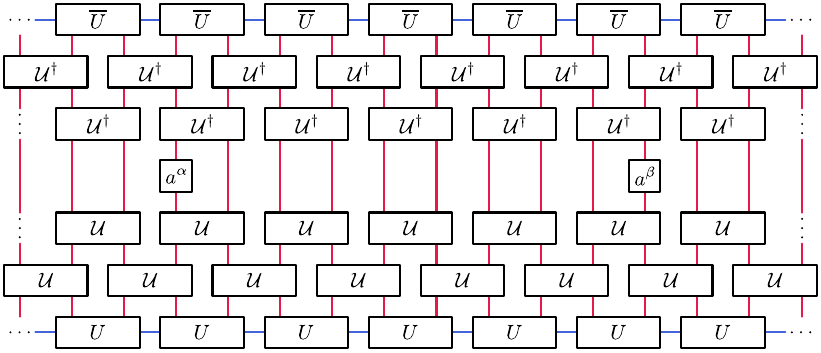}
	\caption{Tensor diagram for the two-point equal-time correlation function \( C^{\alpha \beta} (x, r, t) \) for the case of odd \( x \), \( r = 9 \), and \( t = 1 \).}
	\label{fig:invariant_mps_correlation_function}
\end{figure}

\twocolumngrid

\section{Introduction}

Correlation functions are an important subject of study in the context of quantum many-body dynamics because they encode a plethora of information about the underlying quantum many-body system. That said, it is generally hard to compute correlation functions exactly; noninteracting systems, certain integrable models~\cite{Keyserlingk2018, Chan2018}, and dual-unitary lattice models~\cite{Bertini2019} are rare exceptions.

Tensor network states~\cite{Cirac2021} represent many of the physically relevant states of a quantum-many body system. They constitute an exponentially small subset of the full Hilbert space~\cite{Poulin2011}. Their pre-eminent one-dimensional representatives, matrix product states (MPS), have been shown to faithfully represent ground states of gapped local Hamiltonians~\cite{Verstraete2006, Hastings2007, Arad2013}. With projected entangled pair states (PEPS), MPS have been generalized to two (or more) spatial dimensions. While only a weaker link between local Hamiltonians and PEPS has been proven rigorously, two-dimensional PEPS are known to efficiently represent a wide class of strongly correlated states~\cite{Cirac2021, PerezGarcia2008}. Moreover, tensor network states can be used for numerically studying the dynamics of quantum-many body systems~\cite{Daley2004, White2004, Vidal2007, Orus2008, Banuls2009, Schollwoeck2011}. However, already in one spatial dimension, the utility of MPS is typically limited by the linear growth of the entanglement entropy with time~\cite{Calabrese2005}.

In this context, dual-unitary quantum circuits~\cite{Bertini2019} and solvable MPS~\cite{Piroli2020} have exceptional features. The former describe the dynamics of a particular quantum lattice model. The defining two-particle gates are unitary in space and time. Not only do dual-unitary quantum circuits have analytically accessible correlation functions~\cite{Bertini2019, Piroli2020}, but also a range of other aspects of their dynamics can be computed exactly~\cite{Bertini2019a, Gopalakrishnan2019, Claeys2020, Bertini2020, Bertini2020a}. Solvable MPS constitute the complete class of initial states with analytically accessible dynamics~\cite{Piroli2020}. Dual-unitary quantum circuits have been realized experimentally~\cite{Mi2021, Chertkov2022}.

They have furthermore been extended to more general cases~\cite{Prosen2021, Jonay2021, Mestyan2022, Kos2023}. In particular, the concepts of dual-unitary quantum circuits and solvable MPS have been generalized to two spatial dimensions~\cite{Milbradt2023}. Ternary-unitary gates are two-times-two-particle gates that are unitary in both spatial directions and in time. Much like their one-dimensional counterparts, they have analytically accessible correlation functions. Solvable PEPS constitute a class of initial states with analytically accessible dynamics that is albeit not complete~\cite{Milbradt2023}.

Here, we investigate the average behavior of correlations in solvable tensor network states as quantified by a physically motivated two-point equal-time correlation function. Instead of introducing randomness on the level of the dual-unitary quantum circuit~\cite{Kasim2023}, we average the correlation function with respect to ensembles of random disordered solvable tensor network states arising from the Haar measure on the unitary group.

The importance of defining ensembles of random tensor network states for the purpose of exploring typical properties of physically relevant states has been recognized more than a decade ago~\cite{Garnerone2010}. MPS ensembles have been utilized to gain insights into, among other things, the typicality of expectation values of local observables~\cite{Garnerone2010, Garnerone2010a}, equilibration under Hamiltonian time evolution~\cite{Haferkamp2021}, the entropy of subsystems~\cite{Collins2013}, nonstabilizerness~\cite{Chen2022}, and the behavior of correlations~\cite{Movassagh2017, Movassagh2022, Lancien2021, Bensa2023, Svetlichnyy2022, Haag2023}.

We provide an exact analytical expression for the average of the \( k \)th moment of the two-point equal-time correlation function for ensembles of random disordered solvable MPS and PEPS. The complexity of the expression scales with \( k! \) and is independent of the complexity of the underlying tensor network state. It turns out that the correlation function vanishes on average, while its covariance is nonzero.

\section{Preliminaries} \label{sec:preliminaries}

\subsection{Dual-unitary quantum circuits} \label{sec:dual}

Dual-unitary quantum circuits were first introduced in Ref.~\cite{Bertini2019}. A dual-unitary matrix \begin{align}
    \vcenter{\hbox{\includegraphics{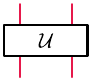}}} \in \End \left ( \mathbb{C}^d \otimes \mathbb{C}^d \right ) 
\end{align} is a unitary matrix that is unitary in space and time. That is, it satisfies \begin{align} \label{eq:temporal_unitarity}
    \vcenter{\hbox{\includegraphics{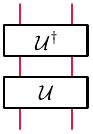}}} = \vcenter{\hbox{\includegraphics{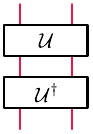}}} = \vcenter{\hbox{\includegraphics{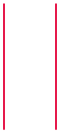}}}
\end{align} as well as \begin{subequations} \label{eq:spatial_unitarity}
    \begin{align}
        \vcenter{\hbox{\includegraphics{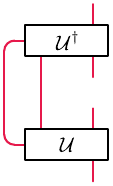}}} \, & = \, \vcenter{\hbox{\includegraphics{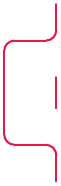}}}
    \end{align} and \begin{align}
        \vcenter{\hbox{\includegraphics{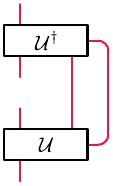}}} \, & = \, \vcenter{\hbox{\includegraphics{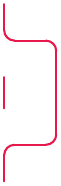}}}.
    \end{align}
\end{subequations} For \( d = 2 \), any dual-unitary matrix can be written as~\cite{Bertini2019}
\begin{align}
    \mathcal{U} = \exp (i \phi) \left ( u_+ \otimes u_- \right ) V (J) \left ( v_- \otimes v_+ \right ),
\end{align} where \( \phi, J \in \mathbb{R} \), \( u_\pm, v_\pm \in \mathrm{SU} (2) \), and \begin{align}
    V (J) = \exp \left [ - i \left ( \frac{\pi}{4} \sigma_x \otimes \sigma_x + \frac{\pi}{4} \sigma_y \otimes \sigma_y + J \sigma_z \otimes \sigma_z \right ) \right ].
\end{align} No such general parameterizations have been reported for \( d \geq 3 \).

The authors of Ref.~\cite{Bertini2019} consider the discrete and local time evolution of a one-dimensional \( 2 n \)-particle quantum system with periodic boundary conditions and local dimension \( d \). The evolution is governed by a dual-unitary matrix; one time step is realized via \begin{align} \label{eq:mps_time_evolution}
    \vcenter{\hbox{\includegraphics{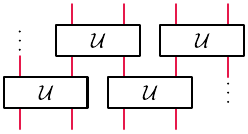}}} \in \End \left ( \left ( \mathbb{C}^d \right )^{\otimes 2 n} \right ).
\end{align} The authors find that the system has analytically accessible correlation functions~\cite{Bertini2019}.

\subsection{Solvable matrix product states} \label{sec:invariant_mps}

Like the authors of Ref.~\cite{Bertini2019}, we consider a \( 2 n \)-particle system with periodic boundary conditions and local dimension \( d \). We take the thermodynamic limit of \( n \to \infty \). In this setting, the time evolution under Eq.~\eqref{eq:mps_time_evolution} becomes intractable for general initial states. Therefore, we consider systems that can be parameterized by solvable matrix product states, which are a class of states whose dynamics under dual-unitary quantum circuits can be computed exactly~\cite{Piroli2020}. Before defining this special class of states, we briefly introduce matrix product states (MPS), shift-invariant MPS, infinite MPS, and the concept of injectivity in the context of translation-invariant MPS.

MPS are the pre-eminent tensor network structure in one dimension~\cite{Cirac2021}. A \( 2 n \)-particle MPS with periodic boundary conditions and local (physical) dimension \( d \) is given by \begin{align}
	\ket{\psi} = \sum_{i_1, \dots, i_{2 n}} \tr \left ( A_{i_1}^{(1)} \cdots A_{i_{2 n}}^{(2 n)} \right ) \ket{i_1 \cdots i_{2 n}},
\end{align} where \( A_{i_j}^{(j)} \in \End \left ( \mathbb{C}^D \right ) \) with \( i_j \in \{ 1, 2, \dots, d \} \) and \( j \in \{ 1, \dots, n \} \). \( D \) is called the bond dimension of the MPS.

Shift-invariant MPS are invariant under translations by a certain number of sites. The underlying symmetry plays a significant role in the context of this work. In the context of this section, we focus on two-site shift-invariant MPS; they are given by \begin{align} \label{eq:invariant_mps_ab}
	\ket{\psi} = \sum_{i_1, \dots, i_{2 n}} \tr \left ( A_{i_1} B_{i_2} \cdots A_{i_{2 n - 1}} B_{i_{2 n}} \right ) \ket{i_1 \cdots i_{2 n}},
\end{align} where \( A_{i_j}, B_{i_j} \in \End \left ( \mathbb{C}^D \right ) \) with \( i_j \in \{ 1, 2, \dots, d \} \) and \( j \in \{ 1, \dots, n \} \). One can define a matrix \( W \in \End \left ( \mathbb{C}^d \otimes \mathbb{C}^D \right ) \) such that \begin{align}
	\ket{\psi} & = \sum_{i_1, \dots, i_{2 n}} \tr \left ( W_{i_1, i_2} \cdots W_{i_{2 n - 1}, i_{2 n}} \right ) \ket{i_1 \cdots i_{2 n}} \label{eq:invariant_mps_w} \\
	& = \raisebox{-7pt}{\includegraphics{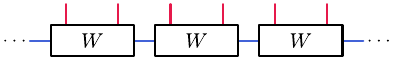}}. \label{eq:invariant_mps_w_graphical}
\end{align} In the second line, we have introduced a graphical notation in which vertical (red) legs represent physical space indices \( \left ( \mathbb{C}^d \right ) \) and horizontal (blue) legs represent bond space indices \( \left ( \mathbb{C}^D \right ) \). The periodic boundary conditions are implicit.

A two-site shift-invariant MPS thus corresponds to a translation-invariant MPS with physical dimension \( d^2 \). As we consider the thermodynamic limit of \( n \to \infty \), we are in the realm of infinite MPS~\cite{Cirac2021}.

Finally, let us state the definition of injectivity~\cite{PerezGarcia2007} in the context of translation-invariant MPS. An MPS \( \ket{\psi} \) as defined in Eq.~\eqref{eq:invariant_mps_w} is called injective if the linear map \begin{align}
    X \mapsto \sum_{i_1, \dots, i_{2 n}} \tr \left ( X W_{i_1, i_2} \cdots W_{i_{2 n - 1}, i_{2 n}} \right ) \ket{i_1 \cdots i_{2 n}}
\end{align} is injective. Importantly, injectivity is a generic property~\cite{PerezGarcia2007}.

With that, we have laid the groundwork to define solvable MPS as a class of injective two-site shift-invariant MPS that are parameterized by a unitary matrix~\cite{Piroli2020} \begin{align}
    \raisebox{-7pt}{\includegraphics{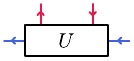}} \in \mathrm{U} (d D),
\end{align} where the arrows denote the input and output. That is, any solvable MPS can be parameterized as \begin{align}
    \ket{\psi} & = \frac{1}{\sqrt{d^n}} \sum_{i_1, \dots, i_{2 n}} \tr \left ( U_{i_1, i_2} \cdots U_{i_{2 n - 1}, i_{2 n}} \right ) \ket{i_1 \cdots i_{2 n}} \label{eq:invariant_mps} \\
    & = \frac{1}{\sqrt{d^n}} \raisebox{-7pt}{\includegraphics{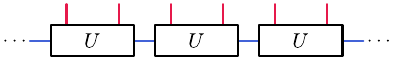}},
\end{align} where the factor \( 1 / \sqrt{d^n} \) ensures that \( \braket{\psi}{\psi} = 1 \), as we discuss in App.~\ref{app:invariant_mps_points}.

Solvable MPS facilitate the analytic computation of correlations~\cite{Piroli2020}. This is thanks to the fact that the transfer matrix of \( \ket{\psi} \),
\begin{align} \label{eq:invariant_mps_transfer_matrix}
    E = \frac{1}{d} \sum_{i_1, i_2 = 1}^d U_{i_1, i_2} \otimes \overline{U_{i_1, i_2}} = \frac{1}{d} \vcenter{\hbox{\includegraphics{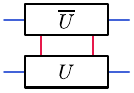}}},
\end{align} has unique left and right fixed points. As we discuss in App.~\ref{app:invariant_mps_points}, the former is given by \begin{subequations} \label{eq:mps_fixed_point}
    \begin{align}
        \bra{L} = \frac{1}{\sqrt{D}} \sum_{i = 1}^D \bra{i} \otimes \bra{i} = \frac{1}{\sqrt{D}} \vcenter{\hbox{\includegraphics{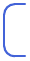}}},
    \end{align} and the latter is given by \begin{align}
        \ket{R} = \frac{1}{\sqrt{D}} \sum_{i = 1}^D \ket{i} \otimes \ket{i} = \frac{1}{\sqrt{D}} \vcenter{\hbox{\includegraphics{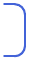}}}.
    \end{align}
\end{subequations}

In this work, we are interested in correlations quantified by the two-point equal-time correlation function \begin{align} \label{eq:mps_correlation_function}
    C^{\alpha \beta} (x, r, t) = \fv{\psi (t)}{a_x^\alpha a_{x + r}^\beta}{\psi (t)},
\end{align} where \( \ket{\psi (t)} \) is a solvable MPS evolved until time \( t \in \mathbb{N} \) by a dual-unitary quantum circuit. \( \left \{ a_x^\alpha \right \} \) with \( \alpha \in \{ 0, \dots, d^2 - 1 \} \) is a basis of the space of operators acting on site \( x \). We assume the basis to be Hilbert-Schmidt orthonormal and choose \( a^0 = I \), implying that \( \tr \left ( a^\alpha \right ) = 0 \) for \( \alpha \neq 0 \).

\( C^{\alpha \beta} (x, r, t) \) is normalized so that \( C^{0 0} (x, r, t) = 1 \). \( C^{\alpha \beta} (x, r, t) \) vanishes under a number of circumstances, as laid out in Ref.~\cite{Piroli2020}. In particular, for \( \alpha \neq 0 \) and \( \beta \neq 0 \), \begin{align} \label{eq:adrian}
    C^{\alpha \beta} (x, r, t) = \begin{cases}
        0 & x \ \text{even} \\
        0 & r \ \text{even} \\
        0 & r < 4 t + 1 \\
        D_1^{\alpha \beta} (x, r, t) & r = 4 t + 1 \\
        D_2^{\alpha \beta} (x, r, t) & r > 4 t + 1
    \end{cases},
\end{align} where our indexing convention is sketched in Fig.~\ref{fig:indexing}~(a). \( D_1^{\alpha \beta} (x, r, t) \) and \( D_2^{\alpha \beta} (x, r, t) \) are functions given by simple tensor diagrams.

As an illustrative example for the latter, let us consider the case of odd \( x \), \( r = 9 \), and \( t = 1 \) sketched in Fig.~\ref{fig:invariant_mps_correlation_function}. By exploiting the solvability of \( \ket{\psi} \) and the dual unitarity of the temporal evolution, \begin{align} \label{eq:fabian}
    & D_2^{\alpha \beta} (x, r, t) \nonumber \\
    & \qquad = \frac{\hbox{\includegraphics{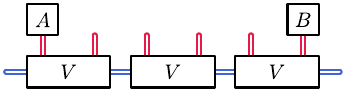}}}{d^3 D},
\end{align} where we have defined \( A = \mathcal{M}_+^{2 t} \left ( a^\alpha \right ) \), \( B = \mathcal{M}_-^{2 t} \left ( a^\beta \right ) \), and \( V = U \otimes \overline{U} \) to adopt a compact folded notation. As in Ref.~\cite{Bertini2019}, the linear maps \( \mathcal{M}_+ \) and \( \mathcal{M}_- \) are, respectively, given by \begin{subequations} \label{eq:m}
    \begin{align} 
        \mathcal{M}_+ (a) = \frac{1}{d} \tr_1 \left [ \mathcal{U}^\dagger (a \otimes I) \mathcal{U} \right ] = \frac{1}{d} \vcenter{\hbox{\includegraphics{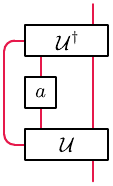}}}
    \end{align} and \begin{align}
        \mathcal{M}_- (a) = \frac{1}{d} \tr_2 \left [ \mathcal{U}^\dagger (I \otimes a) \mathcal{U} \right ] = \frac{1}{d} \vcenter{\hbox{\includegraphics{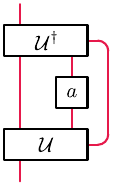}}}.
    \end{align}
\end{subequations} For more details, see App.~\ref{app:invariant_mps_correlation_function}.

We will compute the average of the \( k \)th moment of \( C^{\alpha \beta} (x, r, t) \) with respect to a measure of random disordered solvable MPS, which we will define later. The underlying idea is to draw the corresponding unitary matrices independently from the Haar measure on the unitary group. 

\subsection{Ternary-unitary quantum circuits} \label{sec:ternary}

First introduced in Ref.~\cite{Milbradt2023}, ternary-unitary quantum circuits are the generalization of dual-unitary quantum circuits to two dimensions. Just like dual-unitary matrices, ternary-unitary matrices are unitary in both space and time. The defining difference is that ternary-unitary matrices are unitary also in the additional spatial direction \( x_2 \). In a top-down perspective, we denote a ternary-unitary matrix by \begin{align}
    \vcenter{\hbox{\includegraphics{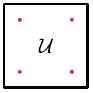}}} \in \End \left ( \mathbb{C}^d \otimes \mathbb{C}^d \otimes \mathbb{C}^d \otimes \mathbb{C}^d \right ).
\end{align}

The authors of Ref.~\cite{Milbradt2023} consider a system that is confined to a \( 2 n \times m \) lattice with periodic boundary conditions whose evolution is governed by a ternary-unitary matrix; one time step is realized via \begin{align}
    \vcenter{\hbox{\includegraphics{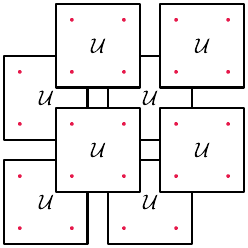}}} \in \End \left ( \left ( \mathbb{C}^d \right )^{\otimes 2 m n} \right ).
\end{align} Just like in one dimension, it turns out that the system has analytically accessible correlation functions~\cite{Milbradt2023}.

\subsection{Solvable projected entangled pair states} \label{sec:invariant_peps}

\begin{figure}[t]
	\centering
	\includegraphics{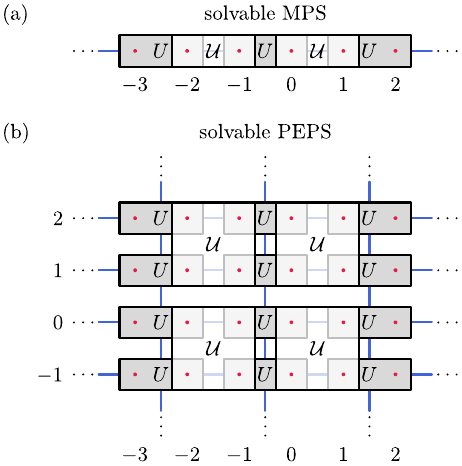}
	\caption{Top-down perspective of a solvable MPS (a) and a solvable PEPS (b) to visualize our indexing convention. The boxes labeled with \( \mathcal{U} \) are dual-unitary matrices (a) and ternary-unitary matrices (b), corresponding to the evolution by half a time step.}
	\label{fig:indexing}
\end{figure}

Projected entangled pair states (PEPS) are the generalization of MPS to two (or more) dimensions~\cite{Cirac2021}. We consider a two-dimensional system that is confined to a \( 2 n \times m \) lattice with periodic boundary conditions. We take the thermodynamic limits of \( n \to \infty \) and \( m \to \infty \). In the context of this section, we focus on a system that is two-site shift-invariant in the \( x_1 \)-direction and translation-invariant in the \( x_2 \)-direction. In analogy to Eq.~\eqref{eq:invariant_mps_w_graphical}, we denote a state of this system by \begin{align}
    \ket{\psi} = \vcenter{\hbox{\includegraphics{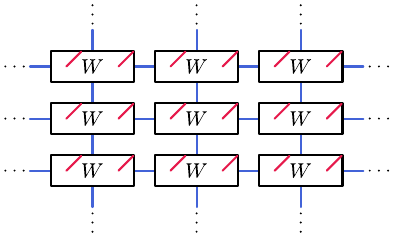}}}.
\end{align}

Solvable PEPS are a generalization of solvable MPS to two dimensions~\cite{Milbradt2023} that uses the framework of matrix product unitaries (MPU)~\cite{Cirac2017}. A PEPS \( \ket{\psi} \) is defined to be solvable if and only if there exists an MPU-generating tensor \( U \) such that \begin{align} \label{eq:invariant_peps}
    \ket{\psi} = \frac{\hbox{\includegraphics{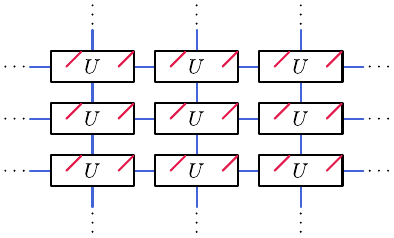}}}{\sqrt{d^{m n} D^n}},
\end{align} where the factor \( 1 / \sqrt{d^{m n} D^n} \) ensures that \( \braket{\psi}{\psi} = 1 \). In the present context, \( U \) is said to generate an MPU if the matrix product operator \begin{align}
    \vcenter{\hbox{\includegraphics{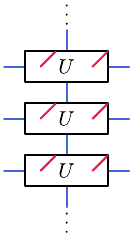}}}
\end{align} is unitary after grouping the physical legs with the horizontal bond legs: \begin{align}
    \vcenter{\hbox{\includegraphics{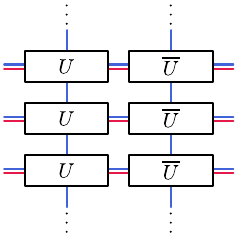}}} = \vcenter{\hbox{\includegraphics{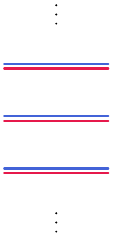}}}
\end{align} With \( V = U \otimes \overline{U} \), the condition reads \begin{align} \label{eq:mpu_condition}
    \vcenter{\hbox{\includegraphics{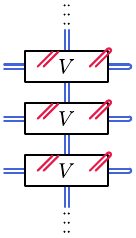}}} = \vcenter{\hbox{\includegraphics{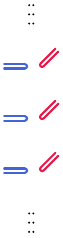}}}.
\end{align}

We shall focus on a class of solvable PEPS that is parameterized by unitary matrices \( \widetilde{U}, \widehat{U} \in \mathrm{U} (d D) \) in that \( U \) is given by
\begin{align} \label{eq:invariant_peps_parameterization}
    \vcenter{\hbox{\includegraphics{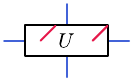}}} = \vcenter{\hbox{\includegraphics{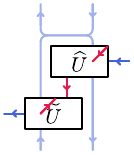}}},
\end{align} where the vertical (light blue) legs represent a bond space of dimension \( \sqrt{\mathbb{C}^D} \). This parameterization implies that \( U \) generates a simple MPU~\cite{Cirac2017, Milbradt2023}. With \( \widetilde{V} = \widetilde{U} \otimes \overline{\widetilde{U}} \) and \( \widehat{V} = \widehat{U} \otimes \overline{\widehat{U}} \), the conditions for simplicity read \begin{subequations} \label{eq:invariant_peps_simplicity}
    \begin{align} \label{eq:invariant_peps_simplicity_1}
        \vcenter{\hbox{\includegraphics{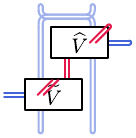}}} = \vcenter{\hbox{\includegraphics{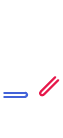}}}
    \end{align} and \begin{align} \label{eq:invariant_peps_simplicity_2}
        \vcenter{\hbox{\includegraphics{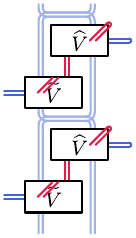}}} = \vcenter{\hbox{\includegraphics{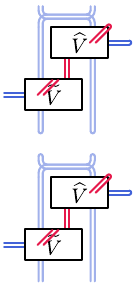}}}.
    \end{align}
\end{subequations}

As for solvable MPS, we are interested in correlations quantified by the two-point equal-time correlation function \begin{align} \label{eq:peps_correlation_function}
    C^{\alpha \beta} (x, r, t) = \fv{\psi (t)}{a_x^\alpha a_{x + r}^\beta}{\psi (t)},
\end{align} where \( \ket{\psi (t)} \) is a solvable PEPS evolved until time \( t \in \mathbb{N} \) by a ternary-unitary quantum circuit, \( \left \{ a_x^\alpha \right \} \) with \( \alpha \in \{ 0, \dots, d^2 - 1 \} \) is a basis of the space of operators acting on site \( x \), and \( C^{\alpha \beta} (x, r, t) \) is normalized so that \( C^{0 0} (x, r, t) = 1 \). Without loss of generality, we shall assume that \( r_2 \geq 0 \)~\cite{Milbradt2023}. As laid out in Ref.~\cite{Milbradt2023}, for \( \alpha \neq 0 \) and \( \beta \neq 0 \), \begin{align}
    C^{\alpha \beta} (x, r, t) & = \Theta \left ( 4 t + x_2 \bmod{2} + 1 - r_2 \right ) \nonumber \\
    & \hphantom{{} = {}} {} \times \begin{cases}
        0 & x_1 \ \text{even} \\
        0 & r_1 \ \text{even} \\
        0 & r_1 < 4 t + 1 \\
        D_1^{\alpha \beta} (x, r, t) & r_1 = 4 t + 1 \\
        D_2^{\alpha \beta} (x, r, t) & r_1 > 4 t + 1
    \end{cases}, \label{eq:simone}
\end{align} where our indexing convention is sketched in Fig~\ref{fig:indexing}~(b), and \( \Theta (\cdot) \) denotes the Heaviside step function. \( D_1^{\alpha \beta} (x, r, t) \) and \( D_2^{\alpha \beta} (x, r, t) \) are once again functions given by simple tensor diagrams.

In analogy to the one-dimensional case [see Eq.~\eqref{eq:fabian}], let us consider the case of odd \( x_1 \), even \( x_2 \), \( r_1 = 9 \), \( r_2 = 0 \), and \( t = 1 \) as an example for \( D_2^{\alpha \beta} (x, r, t) \). In the same folded notation, \begin{align}
    & D_2^{\alpha \beta} (x, r, t) \nonumber \\
    & \qquad = \frac{\hbox{\includegraphics{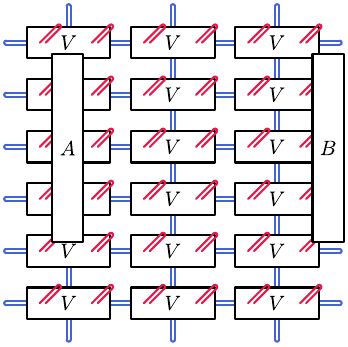}}}{d^{18} D^9}, \label{eq:invariant_peps_correlation_function}
\end{align} where \( A \) and \( B \), respectively, arise from \( a^\alpha \) and \( a^\beta \) through the application of linear maps similar to those defined in Eq.~\eqref{eq:m}~\cite{Milbradt2023}. For more details and a representation of \( D_2^{\alpha \beta} (x, r, t) \) in a nonfolded notation, see Ref.~\cite{Milbradt2023}. See also App.~\ref{app:peps_correlation_function}, where we discuss the more general disordered case.

As in one dimension, we will compute the average of the \( k \)th moment of \( C^{\alpha \beta} (x, r, t) \) with respect to a measure of random disordered solvable PEPS. The underlying idea is to draw the corresponding unitary matrices independently from the Haar measure on the unitary group.

\subsection{\texorpdfstring{\( k \)}{k}-fold twirl}
\label{sec:k-twirl}

As alluded to in Secs.~\ref{sec:invariant_mps} and \ref{sec:invariant_peps}, we will compute averages with respect to measures arising from the Haar measure on the unitary group. We achieve this by employing the \( k \)-fold twirl, which we define in this section.

Let \( X \in \End \bigl ( \left ( \mathbb{C}^q \right )^{\otimes k} \bigr ) \). The \( k \)-fold twirl of \( X \) with respect to the Haar measure on the unitary group \( \mathrm{U} (q) \) is defined~\cite{Collins2006, Roberts2017, Brandao2021} as \begin{align} \label{eq:twirl_before}
	\mathcal{T}_\mathrm{U}^{(k)} (X) = \int \mathrm{d} U \, U^{\otimes k} X \left ( U^\dagger \right )^{\otimes k}.
\end{align} One can employ the Schur-Weyl duality for unitary groups to show~\cite{Collins2003, Collins2006, Haag2023} that \begin{align} \label{eq:twirl_after}
	\mathcal{T}_\mathrm{U}^{(k)} (X) = \sum_{\sigma, \tau \in S_k} \Wg \left ( \sigma \tau^{- 1}, q \right ) P_\sigma^{(q)} \tr \left [ X \left ( P_\tau^{(q)} \right )^T \right ],
\end{align} where \begin{align}
	P_\pi^{(q)} : v_1 \otimes \cdots \otimes v_k \mapsto v_{\sigma^{- 1} (1)} \otimes \cdots \otimes v_{\sigma^{- 1} (k)}
\end{align} is the representation of \( \pi \in S_k \) on \( \left ( \mathbb{C}^q \right )^{\otimes k} \), where \( S_k \) is the symmetric group. \( \Wg \left ( \sigma \tau^{- 1}, q \right ) = \left ( G^{- 1} \right )_{\sigma \tau} \)~\footnote{Although the Weingarten matrix \( W = G^{- 1} \) exits only if \( k \leq q \)~\cite{Collins2021}, the Weingarten function can easily be extended to \( k > q \)~\cite{Collins2006}.} is the Weingarten function, where \( G \in \End ( \mathbb{R}^{k!} ) \) denotes the Gram matrix whose entries are given by \begin{align} \label{eq:gram}
	G_{\sigma \tau} = \tr \left [ P_\sigma^{(q)} \left ( P_\tau^{(q)} \right )^T \right ] = q^{\# \left ( \sigma \tau^{- 1} \right )}.
\end{align} Here, \( \# (\pi) \) counts the number of cycles in the decomposition of \( \pi \in S_k \) into disjoint cycles. Thus, \( \Wg (\pi, q) \) depends only on the conjugacy class of \( \pi \)~\cite{Collins2003}.

\subsection{Graphical notation} \label{sec:graphical}

In this section, we introduce the graphical notation used throughout this work. It coincides with that of Ref.~\cite{Haag2023}. To keep the images compact, we employ the operator-vector correspondence~\cite{Watrous2018} (that is, the vectorization of linear operators). Let \( \{ \ket{i} \} \) denote the standard basis of \( \mathbb{C}^q \). Then, the operator-vector correspondence is defined by \begin{align}
	\fec (\ketbra{i}{j}) = \ket{i} \otimes \ket{j}
\end{align} and extended linearly to the vector space at large.

Because we consider the standard (product) basis to be fixed, we do not distinguish between tensors (as multidimensional arrays) and their basis-independent counterparts (such as vectors and operators). Let \( X \in \End \bigl ( \left ( \mathbb{C}^q \right )^{\otimes k} \bigr ) \). Using the operator-vector correspondence, we denote it by \begin{align}
	\raisebox{-7pt}{\includegraphics{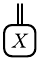}} = \fec (X).
\end{align} Note that the orientation of the legs does not have any meaning in our images. That is, \begin{align}
	\vcenter{\hbox{\includegraphics{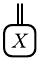}}} = \vcenter{\hbox{\includegraphics{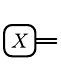}}} = \vcenter{\hbox{\includegraphics{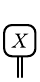}}} = \vcenter{\hbox{\includegraphics{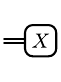}}}.
\end{align} When we need the transpose of an operator, we will explicitly use \begin{align}
	\raisebox{-7pt}{\includegraphics{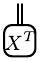}} = \fec \left ( X^T \right ).
\end{align} As such, when we contract two operators \( X \) and \( Y \), we mean the trace of their product: \begin{align}
	\vcenter{\hbox{\includegraphics{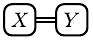}}} = \tr (X Y)
\end{align}

The most frequent operator we will encounter is \begin{align}
	\raisebox{-7pt}{\includegraphics{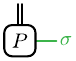}} = \fec \left ( P_{\sigma}^{(q)} \right ),
\end{align} where the horizontal (green) leg indexes permutations. The contraction of two permutation operators is given by \begin{align} \label{eq:permutation_contraction}
    \vcenter{\hbox{\includegraphics{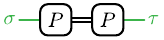}}} = \tr \left ( P_\sigma^{(q)} P_\tau^{(q)} \right ) = q^{\# (\sigma \tau)}.
\end{align}

In the following, we will not explicitly write the operator \( \fec \), as it shall be clear from the context.

With the definition of the Weingarten matrix, \begin{align}
	\vcenter{\hbox{\includegraphics{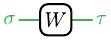}}} = \Wg \left ( \sigma \tau^{- 1}, q \right ),
\end{align} we can write the \( k \)-fold twirl [see Eq.~\eqref{eq:twirl_after}] as \begin{align} \label{eq:twirl_graphical}
	\mathcal{T}_\mathrm{U}^{(k)} (X) = \raisebox{-32pt}{\includegraphics{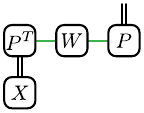}},
\end{align} where the contraction of two green legs corresponds to a summation over the permutations in \( S_k \).

\section{Correlations in disordered solvable MPS} \label{sec:yorgos}

\subsection{Disordered solvable MPS} \label{sec:mps}

We compute the average of the two-point equal-time correlation function \( C^{\alpha \beta} (x, r, t) \) [see Eq.~\eqref{eq:mps_correlation_function}] with respect to a certain measure of random disordered solvable MPS. It will prove straightforward to also compute the average of the \( k \)th moment of the correlation function.

The measure arises from two-site shift-invariant MPS [see Eq.~\eqref{eq:invariant_mps}] by allowing the unitary matrices \( U \) to be different from another. We find it necessary to retain some symmetry to prove solvability, which is why we assume a \( 2 v \)-site shift invariance, where \begin{align} \label{eq:s}
    v \geq \frac{r - 4 t - 3}{2} \equiv s
\end{align} for any \( r \) of interest. For a detailed definition of our class of disordered solvable MPS, see App.~\ref{app:mps}, and for a discussion of the necessity to retain symmetry to prove solvability, see App.~\ref{app:mps_points}. We argue that the symmetry does not constitute a limitation on full disorder because we consider the thermodynamic limit of \( n \to \infty \), leading us to adapt the term disordered solvable MPS.

The measure of random disordered solvable MPS is then defined by drawing the unitary matrices \( U^{(j)} \) with \( j \in \{ 1, \dots, v \} \) independently from the Haar measure on the unitary group \( \mathrm{U} (d D) \). This choice is motivated by the definition of random MPS as first seen in Refs.~\cite{Garnerone2010, Garnerone2010a} and more recently in Refs.~\cite{Haferkamp2021, Haag2023}.

As we discuss in App.~\ref{app:mps_correlation_function}, the procedure of simplifying \( D_1^{\alpha \beta} (x, r, t) \) and \( D_2^{\alpha \beta} (x, r, t) \) is all but identical to that of the nondisordered case. Using again the case of odd \( x \), \( r = 9 \), and \( t = 1 \) as an example for the latter, we have \begin{align}
    & D_2^{\alpha \beta} (x, r, t) \nonumber \\
    & \qquad = \frac{\hbox{\includegraphics{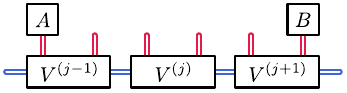}}}{d^3 D},
\end{align} where \( V^{(j)} = U^{(j)} \otimes \overline{U^{(j)}} \) with \( j \in \{ 1, \dots, v \} \). Similarly, the \( k \)th moment of \( D_2^{\alpha \beta} (x, r, t) \) is given by \begin{align} \label{eq:mps_moment}
    & \bigl [ D_2^{\alpha \beta} (x, r, t) \bigr ]^k \nonumber \\
    & \qquad = \frac{\hbox{\includegraphics{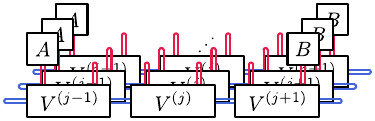}}}{\left ( d^3 D \right )^k},
\end{align} with \( k \) copies of \( D_2^{\alpha \beta} (x, r, t) \).

\subsection{Computing averages} \label{sec:mps_average}

In this section, we introduce the analytical tool that makes the computation of averages with respect to our measure of random disordered solvable MPS comparatively simple. It relies on the fact that we can draw the unitary matrices \( U^{(j)} \) with \( j \in \{ 1, \dots, v \} \) independently from the Haar measure on the unitary group \( \mathrm{U} (d D) \).

We compute the \( k \)-fold twirl [see Eq.~\eqref{eq:twirl_graphical}] independently for each pair of two sites. By doing so, we obtain
the building block \begin{align}
	\raisebox{-7pt}{\includegraphics{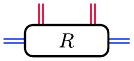}} & = \int \mathrm{d} U^{(j)} \, \raisebox{-14pt}{\includegraphics{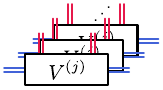}} \label{eq:mps_r_1} \\
	& = \raisebox{-7pt}{\includegraphics{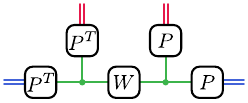}}, \label{eq:mps_r_2}
\end{align} where we have used that \begin{align}
    P_\sigma^{(d D)} = P_\sigma^{(d)} \otimes P_\sigma^{(D)}.
\end{align} Accordingly, the (green) dot represents a Kronecker delta on three permutation indices.

The average of the \( k \)th moment of \( D_2^{\alpha \beta} (x, r, t) \) is then given by \begin{align}
	& \mathbb{E} \bigl [ D_2^{\alpha \beta} (x, r, t) \bigr ]^k \nonumber \\
	& \qquad = \frac{\hbox{\includegraphics{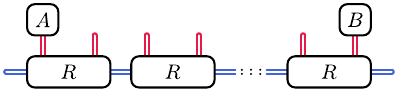}}}{\left ( d^{s + 2} D \right )^k},
\end{align} where \( s \) is defined in Eq.~\eqref{eq:s}. The factor \( 1 / \left ( d^{s + 2} D \right )^k \) ensures that \( D_2^{0 0} (x, r, t) = 1 \) for arbitrary but compatible \( r \) and \( t \).

We could, in principle, work with the building block \( R \). However, it is computationally disadvantageous to have dangling bond (blue) legs whose dimension grows with \( D \). By postponing the contraction of permutation-valued (green) legs, we obtain a building block with fixed dimension for fixed \( k \). With that building block, the average of the \( k \)th moment of \( D_2^{\alpha \beta} (x, r, t) \) is given by \begin{align}
	& \mathbb{E} \bigl [ D_2^{\alpha \beta} (x, r, t) \bigr ]^k \nonumber \\
	& \qquad = \frac{\hbox{\includegraphics{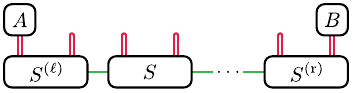}}}{\left ( d^{s + 2} D \right )^k} \\
	& \qquad \equiv \frac{\hbox{\includegraphics{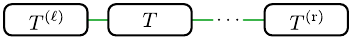}}}{\left ( d^{s + 2} D \right )^k}, \label{eq:sebastian}
\end{align} where the matrix \( T \in \mathbb{R}^{k! \times k!} \) is defined with respect to the standard basis of \( \mathbb{R}^{k!} \) by enumerating the permutations in \( S_k \).

With that, we have reduced the seemingly daunting task of computing the average of the \( k \)th moment of \( D_2^{\alpha \beta} (x, r, t) \) to evaluating the comparatively simple expression \begin{align}
    \mathbb{E} \bigl [ D_2^{\alpha \beta} (x, r, t) \bigr ]^k = \frac{1}{d^{s + 2} D} \fv{T^{(\ell)}}{T^s}{T^{(\mathrm{r})}}.
\end{align}

The definitions of \( T \), \( T^{(\ell)} \), and \( T^{(\mathrm{r})} \) are stated in App.~\ref{app:mps_average}. We furthermore provide a simple Mathematica package~\cite{GitHub} that defines \( T \), \( T^{(\ell)} \), and \( T^{(\mathrm{r})} \) for \( k \in \{ 1, \dots, 20 \} \). The package relies on the package provided by the authors of Ref.~\cite{Fukuda2019} for evaluating the Weingarten function.

Even more straightforward, the average of the \( k \)th moment of \( D_1^{\alpha \beta} (x, r, t) \) is given by \begin{align} \label{eq:mps_special_average}
	\mathbb{E} \bigl [ D_1^{\alpha \beta} (x, r, t) \bigr ]^k & = \frac{1}{\left ( d D \right )^k} \raisebox{-7pt}{\includegraphics{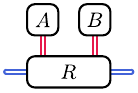}}.
\end{align} We refer to App.~\ref{app:mps_average} for more details.

\subsection{Results} \label{sec:mps_results}

We are now in the position to state our first main result. It is an immediate consequence of the previous section. 

\begin{result} \label{res:mps_average}
    The average of the \( k \)th moment of \( D_1^{\alpha \beta} (x, r, t) \) with respect to the random disordered solvable MPS ensemble is given by the tensor diagram shown in Eq.~\eqref{eq:mps_special_average}, and that of \( D_2^{\alpha \beta} (x, r, t) \) is given by \begin{align}
        \mathbb{E} \bigl [ D_2^{\alpha \beta} (x, r, t) \bigr ]^k = \frac{1}{d^{s + 2} D} \fv{T^{(\ell)}}{T^s}{T^{(\mathrm{r})}},
    \end{align} where \( s \) is defined in Eq.~\eqref{eq:s}.
\end{result}

We shall state the case of \( k = 1 \) as a corollary of Result~\ref{res:mps_average}. We prove the statement in App.~\ref{app:mps_average_correlation_function}.

\begin{restatable}{corollary}{mpsaveragecorrelationfunction} \label{cor:mps_average_correlation_function}
	The averages of \( D_1^{\alpha \beta} (x, r, t) \) and \( D_2^{\alpha \beta} (x, r, t) \) with respect to the random disordered solvable MPS ensemble are given by \begin{align}
        \mathbb{E} D_1^{\alpha \beta} (x, r, t) = \mathbb{E} D_2^{\alpha \beta} (x, r, t) = \frac{1}{d^2} \tr \left ( a^\beta \right ) \tr \left ( a^\beta \right ),
    \end{align} implying that they vanish, except for the trivial case of \( \alpha = \beta = 0 \).
\end{restatable}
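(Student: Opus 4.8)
The plan is to specialize Result~\ref{res:mps_average} to $k = 1$, where every ingredient trivializes. Since $S_1 = \{e\}$, each permutation operator $P_e^{(q)}$ is the identity on $\mathbb{C}^q$, the Gram matrix of Eq.~\eqref{eq:gram} is the $1 \times 1$ matrix $(q)$, and the Weingarten function degenerates to $\Wg(e, q) = 1/q$; consequently the matrices $T$, $T^{(\ell)}$, $T^{(\mathrm{r})}$ of Eq.~\eqref{eq:sebastian} collapse to scalars and the permutation-valued legs carry no information. The first step is to read off the $k = 1$ form of the building block in Eqs.~\eqref{eq:mps_r_1}--\eqref{eq:mps_r_2}: the single-copy twirl is $\mathcal{T}_\mathrm{U}^{(1)}(X) = \int \mathrm{d}U\, U X U^\dagger = \tfrac{1}{dD} \tr(X)\, I_{dD}$, so in the folded notation each averaged block $V^{(j)} = U^{(j)} \otimes \overline{U^{(j)}}$ becomes $\tfrac{1}{dD}\, \ketbra{\fec(I_{dD})}{\fec(I_{dD})}$. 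Graphically this is a ``cap on the inputs, cup on the outputs'': it severs the block, disconnecting the physical and bond legs that passed through it.

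With that in hand, the second step is to propagate this disconnection through the diagram for $D_2^{\alpha \beta}(x, r, t)$ in Eq.~\eqref{eq:fabian} (and the shorter diagram of Eq.~\eqref{eq:mps_special_average} for $D_1^{\alpha \beta}$). Averaging every $U^{(j)}$ cuts all internal legs, so the network falls apart into: (i) a bond-space ($\mathbb{C}^D$) part, which using the explicit left and right fixed points of Eq.~\eqref{eq:mps_fixed_point} and $\braket{L}{R} = 1$ contracts to a scalar contributing a controlled power of $D$; (ii) a loop structure of $\mathbb{C}^d$-lines contributing a controlled power of $d$; and (iii) the two operator insertions $A = \mathcal{M}_+^{2t}(a^\alpha)$ and $B = \mathcal{M}_-^{2t}(a^\beta)$, each of which, being contracted only against the identity produced by the caps and cups, survives merely as its trace. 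The short remaining computation is that $\mathcal{M}_\pm$ are trace preserving: from Eq.~\eqref{eq:m}, $\tr[\mathcal{M}_+(a)] = \tfrac1d \tr[\mathcal{U}^\dagger (a \otimes I) \mathcal{U}] = \tfrac1d \tr(a \otimes I) = \tr(a)$, and likewise for $\mathcal{M}_-$, hence $\tr(A) = \tr(a^\alpha)$ and $\tr(B) = \tr(a^\beta)$. Matching the surviving powers of $d$ and $D$ against the prefactors $1/(d^{s+2}D)$ (for $D_2$, with $s$ as in Eq.~\eqref{eq:s}) and $1/(dD)$ (for $D_1$) then leaves $\mathbb{E} D_1^{\alpha \beta} = \mathbb{E} D_2^{\alpha \beta} = \tfrac{1}{d^2} \tr(a^\alpha) \tr(a^\beta)$. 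The corollary's conclusion is then immediate: with $a^0 = I$ and $\tr(a^\alpha) = 0$ for $\alpha \neq 0$, the product of traces vanishes unless $\alpha = \beta = 0$, in which case it equals $d^2/d^2 = 1$, consistent with $C^{00}(x,r,t) = 1$.

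I expect the only genuinely nontrivial step to be parts (i)--(ii) of the bookkeeping: checking that the $k = 1$ twirl really does sever every bond, so that no path connects the $a^\alpha$ insertion either to the $a^\beta$ insertion or to the boundary, and then counting the residual $\mathbb{C}^d$- and $\mathbb{C}^D$-loops correctly against the normalization. This is where the precise structure of the diagrams in Eqs.~\eqref{eq:fabian} and~\eqref{eq:mps_special_average}, the fixed-point relations of Eq.~\eqref{eq:mps_fixed_point}, and the dual-unitarity/solvability identities of Eqs.~\eqref{eq:temporal_unitarity}--\eqref{eq:spatial_unitarity} are used; everything else is elementary, and the $D_1$ case runs along identical lines with its own (shorter) diagram and prefactor.
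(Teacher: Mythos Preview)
Your proposal is correct and follows essentially the same approach as the paper: specialize Result~\ref{res:mps_average} to $k=1$, where $\Wg(e,dD)=1/(dD)$ collapses the permutation machinery to scalars, yielding $\mathbb{E}D_1 = \mathbb{E}D_2 = \tfrac{1}{d^2}\tr(A)\tr(B)$, and then reduce $\tr(A),\tr(B)$ to $\tr(a^\alpha),\tr(a^\beta)$. One small difference worth noting: you prove trace preservation of $\mathcal{M}_\pm$ via cyclicity and ordinary (temporal) unitarity of $\mathcal{U}$, whereas the paper invokes the spatial unitarity of Eq.~\eqref{eq:spatial_unitarity}; your argument is the more direct one and does not actually require dual-unitarity for this step.
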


Corollary~\ref{cor:mps_average_correlation_function} implies that \( \mathbb{E} \ketbra{\psi}{\psi} = I / d^n \). Given that the ensemble arises from the Haar measure on the unitary group, it is intuitive that the average disordered solvable MPS is given by the maximally mixed state~\cite{Haferkamp2021}. The Haar average of physically motivated correlation functions has been found to vanish also in different contexts~\cite{Keyserlingk2018}.

To underline the power of the Weingarten calculus, we shall also state the case of \( k = 2 \) as another corollary of Result~\ref{res:mps_average}. 

\begin{corollary}
    The average of the second moment of \( D_2^{\alpha \beta} (x, r, t) \) is given by \begin{align}
        & \mathbb{E} \bigl [ D_2^{\alpha \beta} (x, r, t) \bigr ]^2 = \frac{1}{d^4 D^2} \fv{T^{(\ell)}}{\begin{pmatrix} 1 & \displaystyle \frac{d^2 D - D}{d^2 D^2 - 1} \\[1em] 0 & \displaystyle \frac{D^2 - 1}{d^2 D^2 - 1} \end{pmatrix}^s}{T^{(\mathrm{r})}},
    \end{align} where \begin{align}
        \ket{T^{(\ell)}} = \begin{pmatrix} D^2 \tr (A)^2 \\[1em] \displaystyle \frac{\left ( d^2 - 1 \right ) D^3 \tr (A)^2 + d D \left ( D^2 - 1 \right ) \tr \left ( S A^{\otimes 2} \right )}{d^2 D^2 - 1} \end{pmatrix}
    \end{align} and \begin{align}
        \ket{T^{(\mathrm{r})}} = \begin{pmatrix} \displaystyle \frac{d^2 D^2 \tr (B)^2 - d \tr \left ( S B^{\otimes 2} \right )}{d^2 D^2 - 1} \\[1em] \displaystyle \frac{d D \tr \left ( S B^{\otimes 2} \right ) - D \tr (B)^2}{d^2 D^2 - 1} \end{pmatrix}
    \end{align} with \( S = P_{(1 2)}^{(d)} \).
\end{corollary}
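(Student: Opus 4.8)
The plan is to specialize Result~\ref{res:mps_average} to $k = 2$ and then evaluate the resulting linear-algebraic objects, which are now merely two-dimensional. Since $S_2 = \{ e, \tau \}$ with $\tau = (1\,2)$, every permutation-valued (green) leg carries a two-valued index, so the matrix $T$ of App.~\ref{app:mps_average} is $2 \times 2$ and $\ket{T^{(\ell)}}, \ket{T^{(\mathrm{r})}}$ are vectors in $\mathbb{R}^2$. The two permutation operators are $P_e^{(q)} = I$ and $P_\tau^{(q)} = S_q$, the swap on $(\mathbb{C}^q)^{\otimes 2}$, for $q \in \{ d, D, dD \}$, with $S_{dD} = S_d \otimes S_D$ under the identification $P_\sigma^{(dD)} = P_\sigma^{(d)} \otimes P_\sigma^{(D)}$; in particular $S = P_\tau^{(d)}$ is the operator named in the statement, and the elementary identities $\tr\!\bigl( S\, X^{\otimes 2} \bigr) = \tr\!\bigl( X^2 \bigr)$ and $\tr\!\bigl( I\, X^{\otimes 2} \bigr) = \tr(X)^2$ (for $X$ acting on $\mathbb{C}^d$) will be used repeatedly.

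The first concrete step is the Weingarten matrix for $q = dD$. With $\#(e) = 2$ and $\#(\tau) = 1$, Eq.~\eqref{eq:gram} gives the Gram matrix
\begin{align}
    G = \begin{pmatrix} (dD)^2 & dD \\ dD & (dD)^2 \end{pmatrix},
\end{align}
so that $\Wg(e, dD) = 1 / (d^2 D^2 - 1)$ and $\Wg(\tau, dD) = -dD / \bigl[ d^2 D^2 (d^2 D^2 - 1) \bigr]$; the factor $d^2 D^2 - 1$ is the source of every such denominator in the statement. Next I would assemble the bulk transfer matrix. By the construction underlying Eq.~\eqref{eq:sebastian}, $T$ is obtained by contracting one twirled building block $R$ [Eq.~\eqref{eq:mps_r_2}] into the boundary of the next, the permutation-valued legs being left open; its entries are therefore finite sums over $S_2$ of products of the weights $\Wg(\cdot, dD)$ and the scalars $d^{\#(\cdot)}$, $D^{\#(\cdot)}$ supplied by Eqs.~\eqref{eq:gram} and~\eqref{eq:permutation_contraction}. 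Carrying out this routine contraction, and collecting the normalization factors dictated by App.~\ref{app:mps_average} so that the overall prefactor becomes the physical $1/(d^4 D^2)$, produces exactly the $2 \times 2$ matrix displayed between $\bra{T^{(\ell)}}$ and $\ket{T^{(\mathrm{r})}}$; it is then raised to the power $s$ of Eq.~\eqref{eq:s}, the number of independently twirled blocks in the bulk of $\bigl[ D_2^{\alpha\beta}(x, r, t) \bigr]^2$ [Eq.~\eqref{eq:mps_moment}].

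It remains to identify the boundary vectors. At the two ends of the diagram for $\bigl[ D_2^{\alpha\beta}(x, r, t) \bigr]^2$, the $2t$ dual-unitary time layers telescope exactly as in the nondisordered case --- this is where the maps $\mathcal{M}_\pm^{2t}$ of Eq.~\eqref{eq:m}, and hence the operators $A$ and $B$, enter --- leaving $A^{\otimes 2}$ (respectively $B^{\otimes 2}$) to be contracted against the two basis permutations $I$ and $S$ together with the MPS fixed points $\bra{L}$, $\ket{R}$ of Eq.~\eqref{eq:mps_fixed_point}. Evaluating these contractions with the trace identities above, and distributing the Weingarten weights and the $1/\sqrt{D}$ factors of the fixed points in the normalization already fixed for $T$, yields the stated $\ket{T^{(\ell)}}$ (in which $\tr(A)^2$ comes from the pairing with $I$ and $\tr\!\bigl( S A^{\otimes 2} \bigr) = \tr(A^2)$ from the pairing with $S$) and, symmetrically, $\ket{T^{(\mathrm{r})}}$; assembling $\fv{T^{(\ell)}}{T^s}{T^{(\mathrm{r})}}$ with the prefactor $1/(d^4 D^2)$ completes the proof.

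I expect the only delicate point to be the boundary bookkeeping: tracking how the temporal layers collapse at each end, sorting out which powers of $d$ and $D$ originate from the fixed points as opposed to the Weingarten normalization, and checking that the split of the $\Wg(\cdot, dD)$ weights among $T$, $T^{(\ell)}$, and $T^{(\mathrm{r})}$ matches the convention of Result~\ref{res:mps_average}. By contrast, the evaluation of $\Wg$ on $S_2$ and the contraction producing the bulk matrix $T$ are entirely mechanical once $S_2$ has been written out explicitly.
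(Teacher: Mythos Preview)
Your proposal is correct and follows the same route the paper takes: the corollary is obtained by specializing Result~\ref{res:mps_average} to $k=2$, where the explicit formulas of App.~\ref{app:mps_average} for $T$, $\ket{T^{(\ell)}}$, and $\ket{T^{(\mathrm{r})}}$ reduce to $2\times 2$ objects once the Weingarten function on $S_2$ is evaluated. One small simplification: by the time you reach Eqs.~\eqref{eq:mps_t-l} and~\eqref{eq:mps_t-r} the temporal layers have already been absorbed into $A$ and $B$, so no further telescoping is needed at the boundaries---you can plug $A^{\otimes 2}$ and $B^{\otimes 2}$ directly into those formulas.
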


We refer to the aforementioned Mathematica package~\cite{GitHub} for the analytical expressions for \( T \), \( T^{(\ell)} \), and \( T^{(\mathrm{r})} \) for \( k \geq 3 \). While their complexity scales with \( k! \), the expressions are concise and exact.

We stress that the tool underlying Result~\ref{res:mps_average} relies on the fact that we can draw the unitary matrices \( U^{(j)} \) with \( j \in \{ 1, \dots, v \} \) independently from the Haar measure on the unitary group \( \mathrm{U} (d D) \). The tool can thus not be used to study the average behavior of correlations in nondisordered solvable MPS.

\section{Correlations in disordered solvable PEPS} \label{sec:flavio}

\subsection{Disordered solvable PEPS} \label{sec:peps}

The measure with respect to which we compute the average of the two-point equal-time correlation function \( C^{\alpha \beta} (x, r, t) \) [see Eq.~\eqref{eq:peps_correlation_function}] arises from the PEPS defined by Eqs.~\eqref{eq:invariant_peps} and \eqref{eq:invariant_peps_parameterization} by allowing the unitary matrices \( \widetilde{U} \) and \( \widehat{U} \) to be different from another, similar to the one-dimensional case. Once again, we find it necessary to retain some symmetry to prove solvability, which is why we assume a \( 2 v \)-site shift invariance in the \( x_1 \)-direction, where \( v \geq s \) with \( s \) defined in Eq.~\eqref{eq:s}. For a detailed definition of our class of disordered solvable PEPS, see App.~\ref{app:peps}, and for a discussion of the necessity to retain symmetry to prove solvability, see App.~\ref{app:peps_points}.

The measure of random disordered solvable PEPS is then defined by drawing the unitary matrices \( \widetilde{U}^{(i, j)} \) and \( \widehat{U}^{(i, j)} \) with \( i \in \{ 1, \dots, v \} \) and \( j \in \{ 1, \dots, m \} \) independently from the Haar measure on the unitary group \( \mathrm{U} (d D) \).

As we discuss in App.~\ref{app:peps_correlation_function}, the procedure of simplifying \( D_1^{\alpha \beta} (x, r, t) \) and \( D_2^{\alpha \beta} (x, r, t) \) is analogous to that in one dimension.

\subsection{Computing averages} \label{peps:peps_average}

The procedure of computing averages is identical to that in one dimension. Drawing the unitary matrices \( \widetilde{U}^{(i, j)} \) and \( \widehat{U}^{(i, j)} \) with \( i \in \{ 1, \dots, v \} \) and \( j \in \{ 1, \dots, n \} \) independently from the Haar measure on the unitary group \( \mathrm{U} (d D) \), we compute the \( k \)-fold twirl independently for each pair of sites. By doing so, we obtain the building block \begin{align}
	\vcenter{\hbox{\includegraphics{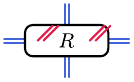}}} & = \int \mathrm{d} \widetilde{U}^{(i, j)} \mathrm{d} \widehat{U}^{(i, j)} \, \vcenter{\hbox{\includegraphics{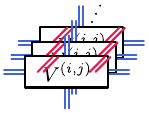}}} \label{eq:peps_r_1} \\
	& = \int \mathrm{d} \widetilde{U}^{(i, j)} \mathrm{d} \widehat{U}^{(i, j)} \, \vcenter{\hbox{\includegraphics{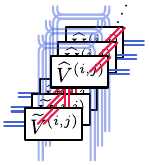}}}. \label{eq:peps_r_2}
\end{align}

As in one dimension, we could work with the building block \( R \), but we shall cut permutation-valued legs instead of bond legs. For the case of odd \( x_1 \), even \( x_2 \), \( r_1 = 9 \), \( r_2 = 0 \), and \( t = 1 \), the average of the \( k \)th moment of \( D_2^{\alpha \beta} (x, r, t) \) is then given by \begin{align}
    & \mathbb{E} \bigl [ D_2^{\alpha \beta} (x, r, t) \bigr ]^k \nonumber \\
    & \qquad = \frac{\hbox{\includegraphics{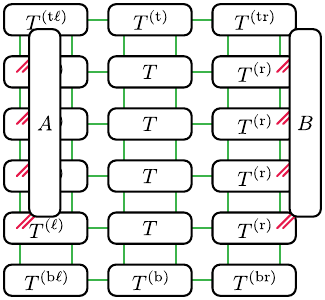}}}{\left ( d^{18} D^9 \right )^k}, \label{eq:peps_average}
\end{align} where the definitions of the tensors are stated in App.~\ref{app:peps_average}. We furthermore provide an additional Mathematica package~\cite{GitHub} that defines the tensors for \( k \in \{ 1, \dots, 20 \} \).

For the case of odd \( x_1 \), even \( x_2 \), \( r_1 = 5 \), \( r_2 = 0 \), and \( t = 1 \), the average of the \( k \)th moment of \( D_1^{\alpha \beta} (x, r, t) \) is given by \begin{align} \label{eq:peps_special_average}
    & \mathbb{E} \bigl [ D_1^{\alpha \beta} (x, r, t) \bigr ]^k = \frac{1}{\left ( d^6 D^7 \right )^k} \vcenter{\hbox{\includegraphics{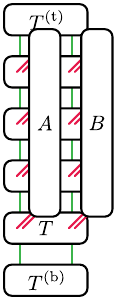}}},
\end{align} where the definitions of the tensors are also stated in App.~\ref{app:peps_average}.

\subsection{Results} \label{sec:peps_results}

We are now in the position to state our second main result. It too is an immediate consequence of the previous section.

\begin{result} \label{res:peps_average}
    The average of the \( k \)th moment of \( D_1^{\alpha \beta} (x, r, t) \) with respect to the random disordered solvable PEPS ensemble is given by a tensor diagram as shown in Eq.~\eqref{eq:peps_special_average}, and that of \( D_2^{\alpha \beta} (x, r, t) \) is given by a tensor diagram as shown in Eq.~\eqref{eq:peps_average}.
\end{result}

We shall state the case of \( k = 1 \) as a corollary of Result~\ref{res:peps_average}. We prove the statement in App.~\ref{app:peps_average_correlation_function}.

\begin{restatable}{corollary}{pepsaveragecorrelationfunction} \label{cor:peps_average_correlation_function}
	The averages of \( D_1^{\alpha \beta} (x, r, t) \) and \( D_2^{\alpha \beta} (x, r, t) \) with respect to the random disordered solvable PEPS ensemble are given by \begin{align}
        \mathbb{E} D_1^{\alpha \beta} (x, r, t) = \mathbb{E} D_2^{\alpha \beta} (x, r, t) = \frac{1}{d^2} \tr \left ( a^\beta \right ) \tr \left ( a^\beta \right ),
    \end{align} implying that they vanish, except for the trivial case of \( \alpha = \beta = 0 \).
\end{restatable}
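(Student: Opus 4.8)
The plan is to specialize Result~\ref{res:peps_average} to \( k = 1 \) and evaluate the resulting tensor network explicitly, in close parallel to the one-dimensional Corollary~\ref{cor:mps_average_correlation_function}. First I would observe that for \( k = 1 \) the symmetric group \( S_1 \) is trivial, so the permutation-valued (green) legs carry only the identity, \( P_e^{(d)} = I_d \) and \( P_e^{(D)} = I_D \), and the Weingarten function degenerates to \( \Wg(e, dD) = 1/(dD) \). Substituting this into the building block of Eqs.~\eqref{eq:peps_r_1}--\eqref{eq:peps_r_2} collapses it: the twirl over each \( \widetilde{U}^{(i,j)} \) and \( \widehat{U}^{(i,j)} \) reduces to the single-copy twirl \( \mathcal{T}_\mathrm{U}^{(1)}(X) = \tr(X)\, I/(dD) \), i.e.\ a partial trace over the input legs tensored with a maximally mixed factor on the output legs, times the scalar \( 1/(dD) \).

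Next I would feed this collapsed building block into the diagrams of Eq.~\eqref{eq:peps_average} for \( D_2^{\alpha \beta}(x, r, t) \) and of Eq.~\eqref{eq:peps_special_average} for \( D_1^{\alpha \beta}(x, r, t) \). Since every building block now factorizes into ``trace the inputs / emit the identity on the outputs,'' the contraction of the full network telescopes: the fixed points \( \bra{L} \), \( \ket{R} \) of Eq.~\eqref{eq:mps_fixed_point} together with the MPU condition~\eqref{eq:mpu_condition} and the simplicity relations~\eqref{eq:invariant_peps_simplicity} close the horizontal and vertical bond loops, and the accumulated powers of \( d \) and \( D \) are canceled exactly by the prefactor \( 1/(d^{18} D^9) \) (respectively \( 1/(d^6 D^7) \)). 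What survives is a single scalar, \( \tr(A)\tr(B)/d^2 \).

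Finally I would translate \( \tr(A) \) and \( \tr(B) \) back to \( \tr(a^\alpha) \) and \( \tr(a^\beta) \). Because \( A \) and \( B \) arise from \( a^\alpha \) and \( a^\beta \) by iterating channels of the same form as \( \mathcal{M}_\pm \) in Eq.~\eqref{eq:m}, namely a partial trace of a unitary conjugation, which is trace preserving, one has \( \tr(A) = \tr(a^\alpha) \) and \( \tr(B) = \tr(a^\beta) \). Invoking the convention \( a^0 = I \), hence \( \tr(a^\gamma) = 0 \) for \( \gamma \neq 0 \), then gives the claimed value and the vanishing for all \( (\alpha, \beta) \neq (0, 0) \); the agreement of the \( D_1 \) and \( D_2 \) expressions follows because both collapse to the same factorized scalar.

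The main obstacle is the bookkeeping of the many \( d \)- and \( D \)-factors and open legs in the PEPS network of Eq.~\eqref{eq:peps_average}, which, unlike the MPS case, carries an extra \( x_2 \)-direction and therefore both horizontal and vertical bond loops: one must check that, once each building block has collapsed to its rank-one form, the simplicity relations~\eqref{eq:invariant_peps_simplicity} genuinely contract the whole diagram down to the single scalar with no residual bond-dimension dependence. Verifying this cancellation carefully is the crux; the remainder is essentially the computation already carried out for the one-dimensional corollary, specialized to the two-dimensional geometry.
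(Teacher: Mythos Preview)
Your proposal is correct and follows essentially the same route as the paper's proof: specialize to \(k=1\), use \(\Wg(e,dD)=1/(dD)\) so that every building block collapses to a rank-one projector, verify that the resulting contractions produce \(\tr(A)\tr(B)/d^2\), and then use trace preservation of the \(\mathcal{M}_\pm\)-type maps to identify \(\tr(A)=\tr(a^\alpha)\), \(\tr(B)=\tr(a^\beta)\). The paper compresses your telescoping step into a single diagram and does not need to re-invoke Eqs.~\eqref{eq:mpu_condition}--\eqref{eq:invariant_peps_simplicity} at the averaging stage (those relations were already spent in deriving the pre-average diagrams), but the logic is the same.
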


As in one dimension, Corollary~\ref{cor:peps_average_correlation_function} implies that \( \mathbb{E} \ketbra{\psi}{\psi} = I / d^n \). Given that also the random disordered solvable PEPS ensemble arises from the Haar measure on the unitary group, this is intuitive.

We refrain from providing explicit expressions for \( k \geq 2 \) for conciseness and refer to the aforementioned Mathematica package~\cite{GitHub}.

\section{Conclusions and outlook} \label{sec:conclusion}

We have investigated the average behavior of a physically motivated two-point equal-time correlation function for ensembles of random disordered solvable MPS and PEPS. By leveraging the Weingarten calculus, we have provided an exact analytical expression for the average of its \( k \)th moment. The complexity of the expression scales with \( k! \) and is independent of the complexity of the underlying tensor network state. Our result implies that the correlation function vanishes on average, while its covariance is nonzero.

A natural extension of our work is the investigation of correlations in solvable PEPS that are more general than those defined by the particular parameterization in Eq.~\eqref{eq:invariant_peps_parameterization}. It would furthermore be interesting to study the average behavior of correlations in the nondisordered case. As we have pointed out, however, the Weingarten calculus quickly loses its power in that case.

Another intriguing possibility is the use of our analytical expressions as reference data for randomized benchmarking protocols for quantum computers. From a more general theoretical perspective, we deem the computation of quantities such as entanglement entropies and out-of-time-order correlators (OTOCs) based on the mathematical framework laid out in this work a rewarding research project for the future.

\begin{acknowledgments}
    D.H.\ thanks Georgios Styliaris and Luna Cesari for fruitful discussions. The research is part of the Munich Quantum Valley, which is supported by the Bavarian State Government with funds from the High-Tech Agenda Bavaria Plus.
\end{acknowledgments}

\bibliography{references}

\onecolumngrid

\appendix

\section{Solvable MPS}

\subsection{Fixed points} \label{app:invariant_mps_points}

In this Appendix, we show that a solvable MPS \( \ket{\psi} \) as defined by Eq.~\eqref{eq:invariant_mps} is normalized. In the process, we will find that its transfer matrix \( E \) [see Eq.~\eqref{eq:invariant_mps_transfer_matrix}] has a unique largest eigenvalue \( \lambda = 1 \) and that the corresponding unique left and right fixed points are given by Eq.~\eqref{eq:mps_fixed_point}.

Corresponding to a completely positive trace-preserving map, the largest eigenvalue of the transfer matrix \begin{align}
    E = \frac{1}{d} \sum_{i_1, i_2 = 1}^d U_{i_1, i_2} \otimes \overline{U_{i_1, i_2}} = \frac{1}{d} \vcenter{\hbox{\includegraphics{invariant_mps_transfer_matrix}}}
\end{align} is \( \lambda = 1 \)~\cite{Wolf2012}. Injectivity then implies that said largest eigenvalue is unique~\cite{Sanz2010}. Given the unitarity of \( U \), it is evident that the corresponding unique left and right fixed points are, respectively, given by \begin{align}
    \bra{L} = \frac{1}{\sqrt{D}} \sum_{i = 1}^D \bra{i} \otimes \bra{i} = \frac{1}{\sqrt{D}} \vcenter{\hbox{\includegraphics{mps_fixed_point_left}}} \qquad \text{and} \qquad \ket{R} = \frac{1}{\sqrt{D}} \sum_{i = 1}^D \ket{i} \otimes \ket{i} = \frac{1}{\sqrt{D}} \vcenter{\hbox{\includegraphics{mps_fixed_point_right}}},
\end{align} where we have chosen the factors of \( 1 / \sqrt{D} \) so that \( \braket{L}{R} = 1 \).

Using the identity \( \braket{\psi}{\psi} = \tr \bigl ( E^n \bigr ) \), the norm of \( \ket{\psi} \) is related to its transfer matrix. As we consider the thermodynamic limit of \( n \to \infty \), the uniqueness of the largest eigenvalue \( \lambda = 1 \) is enough to conclude that \begin{align} \label{eq:invariant_mps_limit}
    E^n & = \ketbra{R}{L} = \frac{1}{D} \vcenter{\hbox{\includegraphics{mps_fixed_point_right}}} \vcenter{\hbox{\includegraphics{mps_fixed_point_left}}}.
\end{align} It is then evident that \begin{align}
    \braket{\psi}{\psi} = \tr \bigl ( E^n \bigr ) = \tr \bigl ( \ketbra{R}{L} \bigr ) = \braket{L}{R} = 1.
\end{align}

\subsection{Correlations} \label{app:invariant_mps_correlation_function}

In this Appendix, we discuss the two-point equal-time correlation function \( C^{\alpha \beta} (x, r, t) \) [see Eqs.\eqref{eq:mps_correlation_function} and \eqref{eq:adrian}]. For reference, for \( \alpha \neq 0 \) and \( \beta \neq 0 \), \begin{align}
    C^{\alpha \beta} (x, r, t) = \fv{\psi (t)}{a_x^\alpha a_{x + r}^\beta}{\psi (t)} = \begin{cases}
        0 & x \ \text{even} \\
        0 & r \ \text{even} \\
        0 & r < 4 t + 1 \\
        D_1^{\alpha \beta} (x, r, t) & r = 4 t + 1 \\
        D_2^{\alpha \beta} (x, r, t) & r > 4 t + 1
    \end{cases}.
\end{align}

\subsubsection{\texorpdfstring{Case of \( r > 4 t + 1 \)}{General case}}

Let us consider the correlation function sketched in Fig.~\ref{fig:invariant_mps_correlation_function} as an example for the case of \( r > 4 t + 1 \). Let \( x \) be an odd integer, \( r = 9 \), and \( t = 1 \). Then, \begin{align}
    D_2^{\alpha \beta} (x, r, t) = \frac{1}{d^n} \vcenter{\hbox{\includegraphics{invariant_mps_correlation_function_lhs}}}.
\end{align}

Using the temporal unitarity of \( \mathcal{U} \) [see Eq.~\eqref{eq:temporal_unitarity}], the expression immediately reduces to \begin{align} \label{eq:invariant_mps_correlation_function_mid_1}
    D_2^{\alpha \beta} (x, r, t) = \frac{1}{d^n} \vcenter{\hbox{\includegraphics{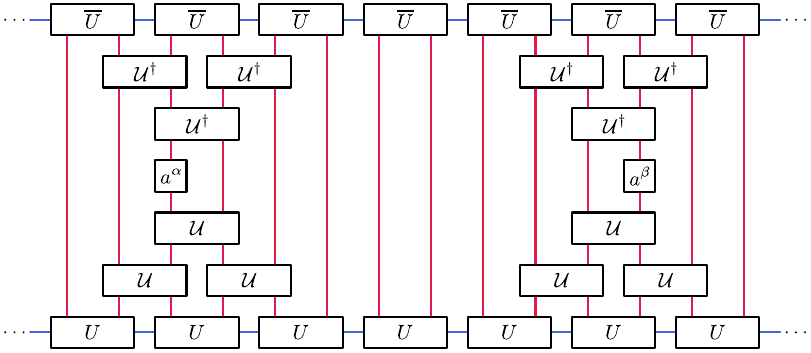}}}.
\end{align}

The \( n - 7 \) pairs of sites not drawn in Eq.~\eqref{eq:invariant_mps_correlation_function_mid_1} experience trivial action, implying that they are given by the transfer matrix \( E \) [see Eq.~\eqref{eq:invariant_mps_transfer_matrix}]. With \begin{align}
    G = \frac{1}{d^7} \vcenter{\hbox{\includegraphics{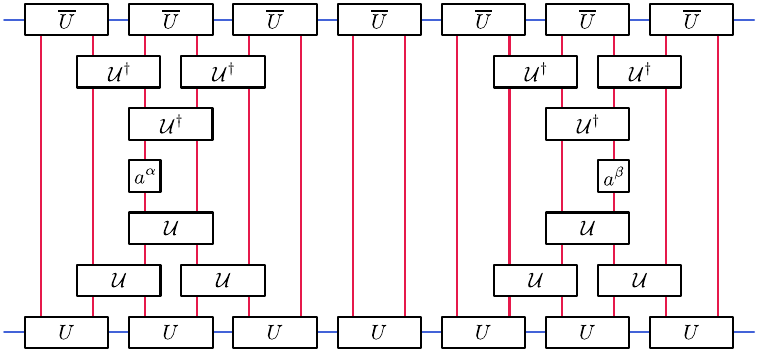}}},
\end{align} the correlation function is thus given by \begin{align}
    D_2^{\alpha \beta} (x, r, t) = \tr \left ( G E^{n - 7} \right ).
\end{align} In the thermodynamic limit of \( n \to \infty \), \begin{align}
    D_2^{\alpha \beta} (x, r, t) = \tr \bigl ( G \ketbra{R}{L} \bigr ) = \fv{L}{G}{R}.
\end{align} That is, \begin{align}
    D_2^{\alpha \beta} (x, r, t) = \frac{1}{d^7 D} \vcenter{\hbox{\includegraphics{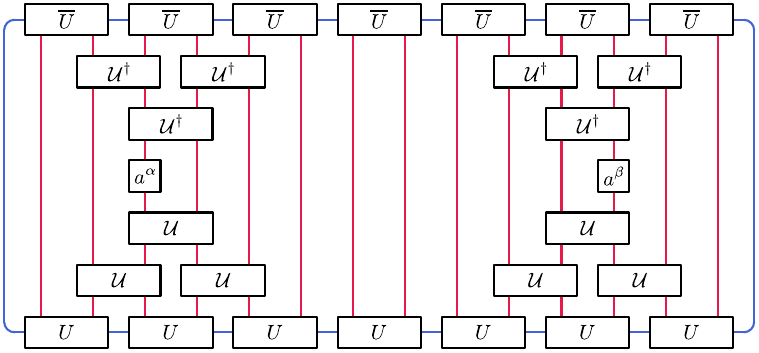}}}.
\end{align}

Using the unitarity of \( U \), the expression then reduces to \begin{align}
    D_2^{\alpha \beta} (x, r, t) = \frac{1}{d^7 D} \vcenter{\hbox{\includegraphics{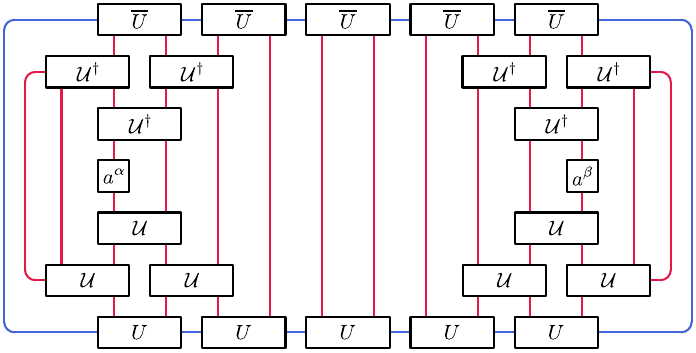}}},
\end{align} and using the spatial unitarity of \( \mathcal{U} \) [see Eq.~\eqref{eq:spatial_unitarity}] and again the unitarity of \( U \) leads us to \begin{align}
    D_2^{\alpha \beta} (x, r, t) = \frac{1}{d^7 D} \vcenter{\hbox{\includegraphics{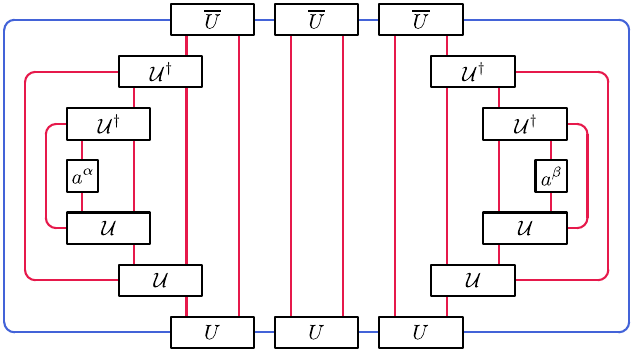}}}.
\end{align}

Motivated by Ref.~\cite{Bertini2019}, we define the maps \( \mathcal{M}_+ \) and \( \mathcal{M}_- \) defined in Eq.~\eqref{eq:m}. For reference, \begin{subequations}
    \begin{align}
        \mathcal{M}_+ (a) = \frac{1}{d} \tr_1 \left [ \mathcal{U}^\dagger (a \otimes I) \mathcal{U} \right ] = \frac{1}{d} \vcenter{\hbox{\includegraphics{mps_m_plus}}}
    \end{align} and \begin{align}
        \mathcal{M}_- (a) = \frac{1}{d} \tr_2 \left [ \mathcal{U}^\dagger (I \otimes a) \mathcal{U} \right ] = \frac{1}{d} \vcenter{\hbox{\includegraphics{mps_m_minus}}}.
    \end{align}
\end{subequations}

Finally, we define \( A = \mathcal{M}_+^{2 t} \left ( a^\alpha \right ) \) and \( B = \mathcal{M}_-^{2 t} \left ( a^\beta \right ) \) so that \begin{align}
    D_2^{\alpha \beta} (x, r, t) = \frac{1}{d^3 D} \vcenter{\hbox{\includegraphics{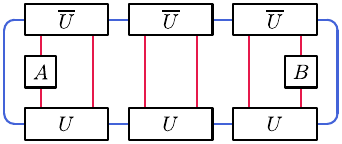}}},
\end{align} and \( V = U \otimes \overline{U} \) so that \begin{align} \label{eq:invariant_mps_correlation_function_rhs}
    D_2^{\alpha \beta} (x, r, t) = \frac{1}{d^3 D} \raisebox{-7pt}{\includegraphics{invariant_mps_correlation_function_rhs}}.
\end{align}

Note that the final expression coincides with Eq.~\eqref{eq:fabian}.

\subsubsection{\texorpdfstring{Case of \( r = 4 t + 1 \)}{Special case}}

As an example for the special case of \( r = 4 t + 1 \), let \( x \) be an odd integer, \( r = 5 \), and \( t = 1 \). Following the same steps as in the case of \( r > 4 t + 1 \), it is straightforward to confirm that \begin{align}
    D_1^{\alpha \beta} (x, 5, 1) & = \frac{1}{d^n} \vcenter{\hbox{\includegraphics{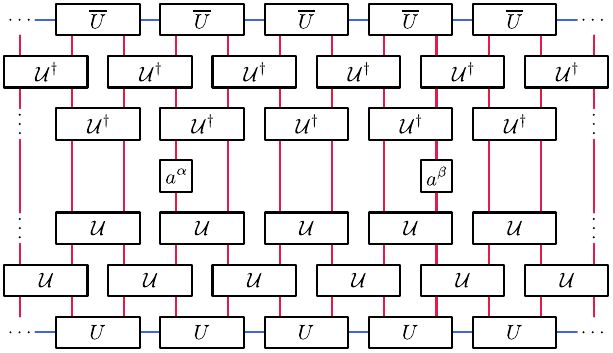}}} \\
    & = \frac{1}{d D} \raisebox{-7pt}{\includegraphics{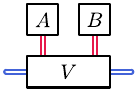}}.
\end{align}

\subsubsection{\texorpdfstring{Cases of even \( x \) and even \( r \)}{Vanishing case}}

If \( x \) were an even integer in either of the cases discussed above, one would obtain a factor of \( \tr (A) \). However, using the spatial unitarity of \( \mathcal{U} \) [see Eq.~\eqref{eq:spatial_unitarity}], it holds that \( \tr (A) = \tr \left ( a^\alpha \right ) \). \( C^{\alpha \beta} (x, r, t) \) would then vanish, except for the trivial case of \( \alpha = 0 \). A similar argument holds for the case of even \( r \).

\section{Disordered solvable MPS} \label{app:mps}

We define a disordered solvable MPS \( \ket{\psi} \) as a \( 2 v \)-site shift-invariant MPS that is given by \( w \in \mathbb{N} \) blocks of \( v \in \mathbb{N} \) consecutive unitary matrices \( U^{(j)} \in \mathrm{U} (d D) \) with \( j \in \{ 1, \dots, v \} \) such that \( v w = n \). That is, \begin{align} \label{eq:mps}
    \ket{\psi} & = \frac{1}{\sqrt{d^n}} \sum_{i_1, \dots, i_{2 v w}} \tr \left ( U_{i_1, i_2}^{(1)} U_{i_3, i_4}^{(2)} \cdots U_{i_{2 v - 1}, i_{2 v}}^{(v)} U_{i_{2 v + 1}, i_{2 v + 2}}^{(1)} \cdots U_{i_{2 v w - 1}, i_{2 v w}}^{(v)} \right ) \ket{i_1 \cdots i_{2 v w}} \\
    & = \frac{1}{\sqrt{d^n}} \raisebox{-7pt}{\includegraphics{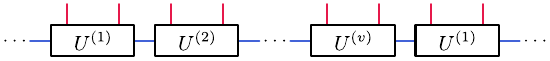}}.
\end{align} Similarly to the step between Eqs.~\eqref{eq:invariant_mps_ab} and \eqref{eq:invariant_mps_w}, one can define a matrix \( Q \in \End \bigl ( \left ( \mathbb{C}^{d} \right )^{\otimes v} \otimes \mathbb{C}^D \bigr ) \) such that \begin{align}
    \ket{\psi} = \frac{1}{\sqrt{d^n}} \sum_{i_1, \dots, i_{2 v w}} \tr \left ( Q_{i_1, \dots, i_{2 v}} \cdots Q_{i_{2 v w - 2 v + 1}, \dots, i_{2 v w}} \right ) \ket{i_1 \cdots i_{2 v w}}.
\end{align}

A disordered solvable MPS thus corresponds to a translation-invariant MPS with physical dimension \( d^{2 v} \). As we discuss in App~\ref{app:mps_points}, this symmetry is necessary to prove the existence of unique left and right fixed points.

\subsection{Fixed points} \label{app:mps_points}

In this Appendix, we show that the unique left and right fixed points of a disordered solvable MPS \( \ket{\psi} \) as defined by Eq.~\eqref{eq:mps} are, respectively, given by \begin{align} \label{eq:mps_fixed_points}
    \bra{L} = \frac{1}{\sqrt{D}} \sum_{i = 1}^D \bra{i} \otimes \bra{i} = \frac{1}{\sqrt{D}} \vcenter{\hbox{\includegraphics{mps_fixed_point_left}}} \qquad \text{and} \qquad \ket{R} = \frac{1}{\sqrt{D}} \sum_{i = 1}^D \ket{i} \otimes \ket{i} = \frac{1}{\sqrt{D}} \vcenter{\hbox{\includegraphics{mps_fixed_point_right}}},
\end{align} as in the nondisordered case (see App.~\ref{app:invariant_mps_points}).

Given App.~\ref{app:invariant_mps_points}, the proof is all but straightforward. Under the assumption that \( \ket{\psi} \) is injective in the thermodynamic limit of \( w \to \infty \), the largest eigenvalue of its transfer matrix \begin{align} \label{eq:mps_transfer_matrix}
    F = \frac{1}{d^v} \sum_{i_1, \dots, i_{2 v} = 1}^d Q_{i_1, \dots, i_{2 v}} \otimes \overline{Q_{i_1, \dots, i_{2 v}}} = \frac{1}{d^v} \vcenter{\hbox{\includegraphics{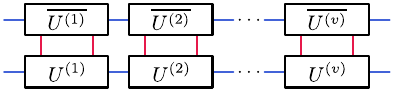}}}
\end{align} is \( \lambda = 1 \) and unique~\cite{Wolf2012, Sanz2010}. Given the unitarity of \( U^{(j)} \) with \( j \in \{ 1, \dots, v \} \), it is then evident that the corresponding unique left and right fixed points are given by Eq.~\eqref{eq:mps_fixed_points}.

Furthermore, it is evident that \begin{align}
    \braket{\psi}{\psi} = \tr \bigl( F^w \bigr ) = \tr \bigl ( \ketbra{R}{L} \bigr) = \braket{L}{R} = 1.
\end{align}

\subsection{Correlations} \label{app:mps_correlation_function}

In this Appendix, we discuss that the procedure of simplifying \( D_1^{\alpha \beta} (x, r, t) \) and \( D_2^{\alpha \beta} (x, r, t) \) is analogous to that of the nondisordered case (see App.~\ref{app:invariant_mps_correlation_function}).

Indeed, most of the arguments of App.~\ref{app:invariant_mps_correlation_function} hold exactly; only the definition of \( G \) has to be adapted slightly. With \begin{align}
    G = \frac{1}{d^v} \vcenter{\hbox{\includegraphics{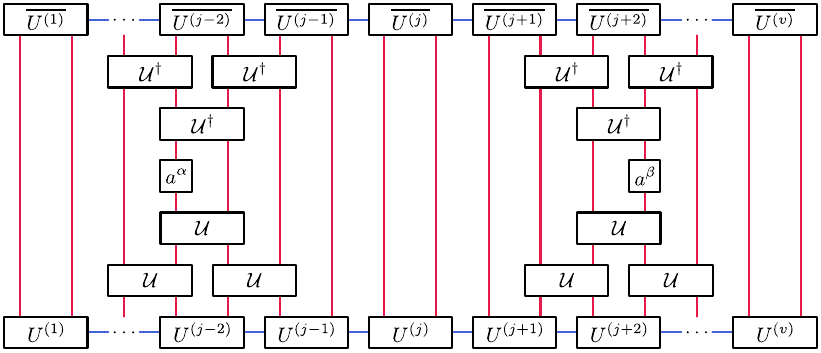}}},
\end{align} the correlation function is given by \begin{align}
    D_2^{\alpha \beta} (x, r, t) = \tr \left ( G E^{w - 1} \right ).
\end{align} In the thermodynamic limit of \( w \to \infty \), \begin{align}
    D_2^{\alpha \beta} (x, r, t) = \tr \bigl ( G \ketbra{R}{L} \bigr ) = \fv{L}{G}{R},
\end{align} which reduces to \begin{align}
    D_2^{\alpha \beta} (x, r, t) = \frac{1}{d^3 D} \raisebox{-7pt}{\includegraphics{mps_correlation_function_rhs}}.
\end{align}

\subsection{Computing averages} \label{app:mps_average}

From Eqs.~\eqref{eq:mps_r_1} and \eqref{eq:mps_r_2}, recall that the building block for computing averages with respect to the random disordered solvable MPS ensemble is given by \begin{align}
	\raisebox{-7pt}{\includegraphics{mps_building_block_lhs}} = \int \mathrm{d} U^{(j)} \, \raisebox{-14pt}{\includegraphics{mps_building_block_mid}} = \raisebox{-7pt}{\includegraphics{mps_building_block_rhs}}.
\end{align}

\subsubsection{\texorpdfstring{Case of \( r > 4 t + 1 \)}{General case}}

As stated in Eq.~\eqref{eq:sebastian}, the average of the \( k \)th moment of \( D_2^{\alpha \beta} (x, r, t) \) is given by \begin{align}
	\mathbb{E} \bigl [ D_2^{\alpha \beta} (x, r, t) \bigr ]^k = \frac{1}{\left ( d^{s + 2} D \right )^k} \vcenter{\hbox{\includegraphics{mps_average_rhs}}},
\end{align} where \( s \) is defined in Eq.~\eqref{eq:s}.

Using Eq.~\eqref{eq:permutation_contraction}, the entries of the matrix \( T \in \mathbb{R}^{k! \times k!} \) are given by \begin{align}
	\raisebox{-7pt}{\includegraphics{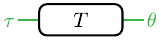}} & = \raisebox{-7pt}{\includegraphics{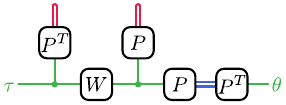}} \\
	& = \sum_{\sigma \in S_k} \Wg \left ( \sigma \tau^{- 1}, d D \right ) d^{\# (\tau)} d^{\# (\sigma)} D^{\# \left ( \sigma \theta^{- 1} \right )}, \label{eq:mps_t}
\end{align} those of the left vector \( \bra{T^{(\ell)}} \in \mathbb{R}^{k!} \) are given by \begin{align}
    \raisebox{-7pt}{\includegraphics{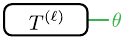}} & = \raisebox{-7pt}{\includegraphics{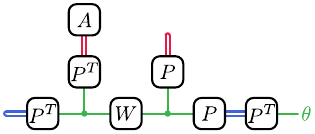}} \\
    & = \sum_{\sigma, \tau \in S_k} \Wg \left ( \sigma \tau^{- 1}, d D \right ) \tr \left [ \left (  P_\tau^{(d)} \right )^T A^{\otimes k} \right ] d^{\# (\sigma)} D^{\# (\tau)} D^{\# \left ( \sigma \theta^{- 1} \right )}, \label{eq:mps_t-l}
\end{align} and those of the right vector \( \ket{T^{(\mathrm{r})}} \in \mathbb{R}^{k!} \) are given by \begin{align}
    \raisebox{-7pt}{\includegraphics{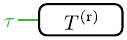}} & = \raisebox{-7pt}{\includegraphics{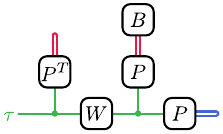}} \\
    & = \sum_{\sigma \in S_k} \Wg \left ( \sigma \tau^{- 1}, d D \right ) \tr \left ( P_\sigma^{(d)} B^{\otimes k} \right ) d^{\# (\tau)} D^{\# (\sigma)}. \label{eq:mps_t-r}
\end{align}

\subsubsection{\texorpdfstring{Case of \( r = 4 t + 1 \)}{Special case}}

As stated in Eq.~\eqref{eq:mps_special_average}, for arbitrary but compatible \( r \) and \( t \), the average of the \( k \)th moment of \( D_1^{\alpha \beta} (x, r, t) \) is given by \begin{align}
	\mathbb{E} \bigl [ D_1^{\alpha \beta} (x, r, t) \bigr ]^k & = \frac{1}{\left ( d D \right )^k} \raisebox{-7pt}{\includegraphics{mps_special_average_lhs}} = \frac{1}{\left ( d D \right )^k} \raisebox{-7pt}{\includegraphics{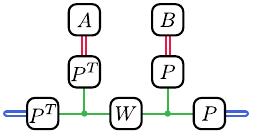}} \\
	& = \frac{1}{\left ( d D \right )^k} \sum_{\sigma, \tau \in S_k} \Wg \left ( \sigma \tau^{- 1}, d D \right ) \tr \left [ \left ( P_\tau^{(d)} \right )^T A^{\otimes k} \right ] \tr \left ( P_\sigma^{(d)} B^{\otimes k} \right ) D^{\# (\tau)} D^{\# (\sigma)}.
\end{align}

\subsection{Proof of Corollary~\ref{cor:mps_average_correlation_function}} \label{app:mps_average_correlation_function}

\mpsaveragecorrelationfunction*

\begin{proof}
    For \( k = 1 \), the Weingarten matrix \( W = G^{- 1} \) (see Sec.~\ref{sec:k-twirl}) has only one entry, namely \begin{align}
        \Wg (e, d D) = \frac{1}{d D}.
    \end{align} With that, one may confirm that \begin{align}
        \mathbb{E} D_1^{\alpha \beta} (x, r, t) = \mathbb{E} D_2^{\alpha \beta} (x, r, t) = \frac{1}{d^2} \raisebox{-18pt}{\includegraphics{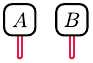}} = \frac{1}{d^2} \tr (A) \tr (B).
    \end{align} Using the spatial unitarity of \( \mathcal{U} \) [see Eq.~\eqref{eq:spatial_unitarity}], it holds that \begin{align}
        & \tr (A) = \tr \left [ \mathcal{M}_+^{2 t} \left ( a^\alpha \right ) \right ] = \tr \left ( a^\alpha \right ), \\
        & \tr (B) = \tr \left [ \mathcal{M}_-^{2 t} \left ( a^\beta \right ) \right ] = \tr \left ( a^\beta \right ).
    \end{align} The statement follows.
\end{proof}

\section{Disordered solvable PEPS} \label{app:peps}

We define a disordered solvable PEPS \( \ket{\psi} \) as a PEPS that is \( 2 v \)-site shift-invariant in \( x_1 \)-direction in that \begin{align} \label{eq:peps}
    \ket{\psi} & = \frac{1}{\sqrt{d^{m n} D^n}} \raisebox{-78pt}{\includegraphics{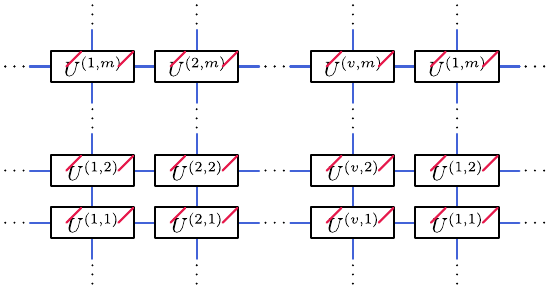}},
\end{align} where the tensors \( U^{(i, j)} \) with \( i \in \{ 1, \dots, v \} \) and \( j \in \{ 1, \dots, m \} \) are parameterized by unitary matrices \( \widetilde{U}^{(i, j)} \) and \( \widehat{U}^{(i, j)} \) in that \begin{align}
    \vcenter{\hbox{\includegraphics{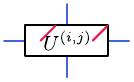}}} = \vcenter{\hbox{\includegraphics{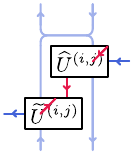}}}.
\end{align} Note that the tensors \( U^{(i, j)} \) satisfy \begin{align} \label{eq:peps_unitarity}
    \vcenter{\hbox{\includegraphics{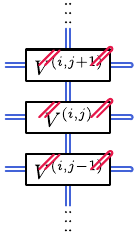}}} = \vcenter{\hbox{\includegraphics{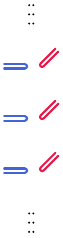}}}
\end{align} as well as \begin{subequations}
    \begin{align} \label{eq:peps_simplicity_1}
        \vcenter{\hbox{\includegraphics{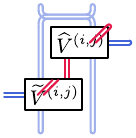}}} = \vcenter{\hbox{\includegraphics{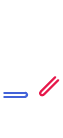}}}
    \end{align} and \begin{align} \label{eq:peps_simplicity_2}
        \vcenter{\hbox{\includegraphics{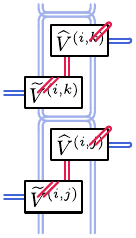}}} = \vcenter{\hbox{\includegraphics{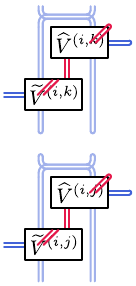}}},
    \end{align}
\end{subequations} in analogy to Eqs.~\eqref{eq:mpu_condition}, and \eqref{eq:invariant_peps_simplicity}.

It will prove beneficial to think in terms of MPS instead of PEPS. We achieve this by contracting the PEPS vertically and defining matrices \( W^{(i)} \in \End \bigl ( \left ( \mathbb{C}^{d} \right )^{\otimes m} \otimes \left ( \mathbb{C}^D \right )^{\otimes m} \bigr ) \) with \( i \in \{ i, \dots, v \} \) such that \begin{align}
    \ket{\psi} & = \frac{1}{\sqrt{d^{m n} D^n}} \sum_{i_1, \dots, i_{2 v w}} \tr \left ( W_{i_1, i_2}^{(1)} W_{i_3, i_4}^{(2)} \cdots W_{i_{2 v - 1}, i_{2 v}}^{(v)} W_{i_{2 v + 1}, i_{2 v + 2}}^{(1)} \cdots W_{i_{2 v w - 1}, i_{2 v w}}^{(v)} \right ) \ket{i_1 \cdots i_{2 v w}} \label{eq:peps_w} \\
    & = \frac{1}{\sqrt{d^{m n} D^n}} \raisebox{-7pt}{\includegraphics{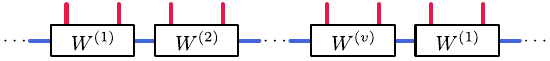}}.
\end{align} As in one dimension (see App.~\ref{app:mps}), one can then define a matrix \( Q \in \End \bigl ( \left ( \mathbb{C}^{d} \right )^{\otimes m v} \otimes \left ( \mathbb{C}^D \right )^{\otimes m} \bigr ) \) such that \begin{align}
    \ket{\psi} = \frac{1}{\sqrt{d^{m n} D^n}} \sum_{i_1, \dots, i_{2 v w}} \tr \left ( Q_{i_1, \dots, i_{2 v}} \cdots Q_{i_{2 v w - 2 v + 1}, \dots, i_{2 v w}} \right ) \ket{i_1 \cdots i_{2 v w}}.
\end{align}

A disordered solvable PEPS thus corresponds to a translation-invariant MPS with physical dimension \( d^{2 m v} \). As we discuss in App~\ref{app:peps_points}, this symmetry is necessary to prove the existence of unique left and right fixed points.

\subsection{Fixed points} \label{app:peps_points}

In this Appendix, we show that the unique left and right fixed points of the MPS \( \ket{\psi} \) defined in Eq.~\eqref{eq:peps_w} are, respectively, given by \begin{align} \label{eq:peps_fixed_points}
    \bra{L} = \frac{1}{\sqrt{D^m}} \sum_{i = 1}^{D^m} \bra{i} \otimes \bra{i} = \frac{1}{\sqrt{D^m}} \vcenter{\hbox{\includegraphics{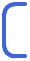}}} \qquad \text{and} \qquad \ket{R} = \frac{1}{\sqrt{D^m}} \sum_{i = 1}^{D^m} \ket{i} \otimes \ket{i} = \frac{1}{\sqrt{D^m}} \vcenter{\hbox{\includegraphics{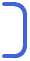}}},
\end{align} in correspondence to the one-dimensional case (see App.~\ref{app:mps_points}).

Given App.~\ref{app:invariant_mps_points}, the proof is all but straightforward. Under the assumption that the MPS \( \ket{\psi} \) defined in Eq.~\eqref{eq:peps_w} is injective in the thermodynamic limit of \( w \to \infty \), the largest eigenvalue of its transfer matrix \begin{align} \label{eq:peps_transfer_matrix}
    F = \frac{1}{d^{m v} D^v} \sum_{i_1, \dots, i_{2 v} = 1}^d Q_{i_1, \dots, i_{2 v}} \otimes \overline{Q_{i_1, \dots, i_{2 v}}} = \frac{1}{d^{m v} D^v} \vcenter{\hbox{\includegraphics{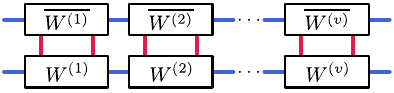}}}
\end{align} is \( \lambda = 1 \) and unique~\cite{Wolf2012, Sanz2010}. Because the tensors \( U^{(i, j)} \) with \( i \in \{ 1, \dots, v \} \) and \( j \in \{ 1, \dots, m \} \) satisfy Eq.~\eqref{eq:peps_unitarity}, the matrices \( W^{(i)} \) are unitary, implying that the corresponding unique left and right fixed points are given by Eq.~\eqref{eq:peps_fixed_points}.

It is evident that \begin{align}
    \braket{\psi}{\psi} = \tr \bigl( F^n \bigr ) = \tr \bigl ( \ketbra{R}{L} \bigr) = \braket{L}{R} = 1.
\end{align}

\subsection{Correlations} \label{app:peps_correlation_function}

In this Appendix, we discuss that the procedure of simplifying \( D_1^{\alpha \beta} (x, r, t) \) and \( D_2^{\alpha \beta} (x, r, t) \) is all but identical to in one dimension (see App.~\ref{app:mps_correlation_function}).

The underlying idea is to contract the PEPS vertically, giving rise to the matrices \( W^{(i)} \) with \( i \in \{ 1, \dots, v \} \) defined in Eq.~\eqref{eq:peps_w}. Because the tensors \( U^{(i, j)} \) with \( i \in \{ 1, \dots, v \} \) and \( j \in \{ 1, \dots, m \} \) satisfy Eq.~\eqref{eq:peps_unitarity}, the matrices \( W^{(i)} \) are unitary, implying that we can employ the exact same machinery as in App.~\ref{app:mps_correlation_function} to simplify \( D_1^{\alpha \beta} (x, r, t) \) and \( D_2^{\alpha \beta} (x, r, t) \) in the \( x_1 \)-direction.

For subsequently simplifying the expressions in the \( x_2 \)-direction, we rely on simplicity. The thermodynamic limit of \( m \to \infty \) guarantees that there exist tensors experiencing trivial action in the \( x_2 \)-direction. By Eq.~\eqref{eq:peps_simplicity_2}, we may thus cut the PEPS parallel to the \( x_1 \)-direction, leaving us with open boundary boundary conditions in the \( x_2 \)-direction. We may then use Eq.~\eqref{eq:peps_simplicity_1} to remove the tensors experiencing trivial action, leading us to \begin{align}
    D_2^{\alpha \beta} (x, r, t) = \frac{1}{d^{18} D^9} \vcenter{\hbox{\includegraphics{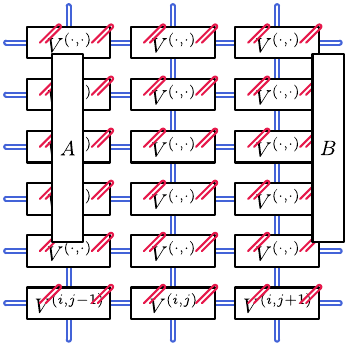}}}.
\end{align}

Eq.~\eqref{eq:peps_simplicity_2} is also the reason for the Heaviside step function in Eq.~\eqref{eq:simone}. Consider, for example, the case of odd \( x_1 \), even \( x_2 \), \( r_1 = 9 \), \( r_2 = 7 \), and \( t = 1 \), which is given by \begin{align}
    D_2^{\alpha \beta} (x, r, t) = \frac{1}{d^{36} D^{15}} \vcenter{\hbox{\includegraphics{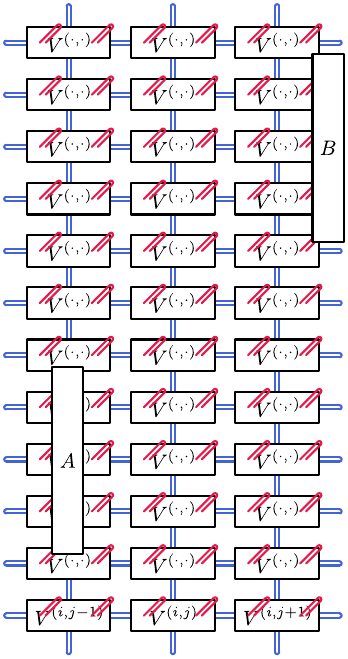}}}.
\end{align} By Eq.~\eqref{eq:peps_simplicity_2}, we may cut the PEPS parallel to the \( x_1 \)-direction, separating \( A \) from \( B \). This, however, leads to factors of \( \tr (A) \) and \( \tr (B) \). Using the spatial unitarity of \( \mathcal{U} \), it holds that \( \tr (A) = \tr \left ( a^\alpha \right ) \) and \( \tr (B) = \tr \left ( a^\beta \right ) \), implying that \( C^{\alpha \beta} (x, r, t) \) vanishes, except for the trivial case of \( \alpha = \beta = 0 \).

\subsection{Computing averages} \label{app:peps_average}
 
From Eqs.~\eqref{eq:peps_r_1} and \eqref{eq:peps_r_2}, recall that the building block for computing averages with respect to the random disordered solvable PEPS ensemble is given by \begin{align}
	\vcenter{\hbox{\includegraphics{peps_building_block_lhs}}} & = \int \mathrm{d} \widetilde{U}^{(j)} \mathrm{d} \widehat{U}^{(j)} \, \vcenter{\hbox{\includegraphics{peps_building_block_mid_1}}} = \int \mathrm{d} \widetilde{U}^{(j)} \mathrm{d} \widehat{U}^{(j)} \, \vcenter{\hbox{\includegraphics{peps_building_block_mid_2}}} \\
	& = \raisebox{-37pt}{\includegraphics{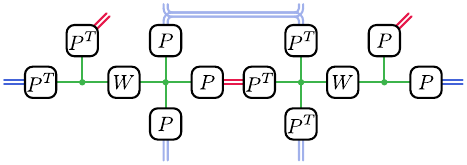}},
\end{align} where we have used that \begin{align}
    P_\sigma^{(d D)} = P_\sigma^{(d)} \otimes P_\sigma^{(D)}.
\end{align}

\subsubsection{\texorpdfstring{Case of \( r > 4 t + 1 \)}{General case}}

As stated in Eq.~\eqref{eq:peps_average}, for the case of odd \( x_1 \), even \( x_2 \), \( r_1 = 9 \), \( r_2 = 0 \) and \( t = 1 \), the average of the \( k \)th moment of \( D_2^{\alpha \beta} (x, r, t) \) is given by \begin{align}
    \mathbb{E} \bigl [ D_2^{\alpha \beta} (x, r, t) \bigr ]^k = \frac{1}{\left ( d^{18} D^9 \right )^k} \vcenter{\hbox{\includegraphics{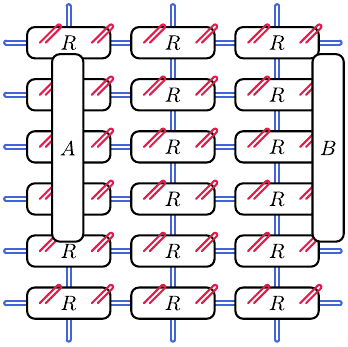}}}.
\end{align} By cutting permutation-valued legs instead of bond legs, we obtain \begin{align}
    \mathbb{E} \bigl [ D_2^{\alpha \beta} (x, r, t) \bigr ]^k = \frac{\hbox{\includegraphics{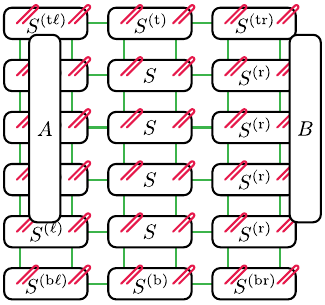}}}{\left ( d^{18} D^9 \right )^k} = \frac{\hbox{\includegraphics{peps_average_rhs}}}{\left ( d^{18} D^9 \right )^k}.
\end{align}

Using Eq.~\eqref{eq:permutation_contraction}, the entries of the tensor \( T \in \mathbb{R}^{k! \times k! \times k! \times k! \times k! \times k!} \) in the bulk are given by \begin{align}
    \raisebox{-28pt}{\includegraphics{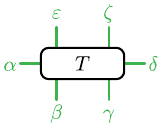}} & = \raisebox{-28pt}{\includegraphics{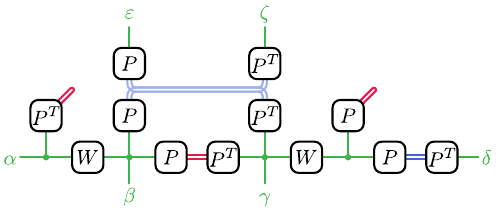}} \\
    & = \sum_{\sigma \in S_k} \Wg \left ( \beta \alpha^{- 1}, d D \right ) \Wg \left ( \sigma \gamma^{- 1}, d D \right ) \nonumber \\
    & \hphantom{{} = \sum_{\sigma \in S_k}} {} \times d^{\# (\alpha)} d^{\# (\sigma)} d^{\# \left ( \beta \gamma^{- 1} \right )} \nonumber \\
    & \hphantom{{} = \sum_{\sigma \in S_k}} {} \times D^{\# \left ( \sigma \delta^{- 1} \right )} D^{\# \left ( \beta \zeta^{- 1} \right ) / 2} D^{\# \left ( \varepsilon \gamma^{- 1} \right ) / 2},
\end{align} those of the tensor \( T^{(\ell)} \in \mathbb{R}^{k! \times k! \times k! \times k! \times k!} \) on the left boundary are given by \begin{align}
    \raisebox{-28pt}{\includegraphics{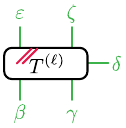}} & = \raisebox{-28pt}{\includegraphics{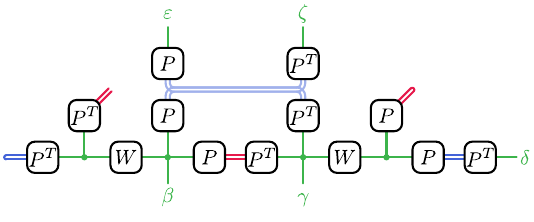}} \\
    & = \sum_{\alpha, \sigma \in S_k} \Wg \left ( \beta \alpha^{- 1}, d D \right ) \Wg \left ( \sigma \gamma^{- 1}, d D \right ) \nonumber \\
    & \hphantom{{} = \sum_{\alpha, \sigma \in S_k}} {} \times \left ( P_\alpha^{(d)} \right )^T \nonumber \\
    & \hphantom{{} = \sum_{\alpha, \sigma \in S_k}} {} \times d^{\# (\sigma)} d^{\# \left ( \beta \gamma^{- 1} \right )} \nonumber \\
    & \hphantom{{} = \sum_{\alpha, \sigma \in S_k}} {} \times D^{\# (\alpha)} D^{\# \left ( \sigma \delta^{- 1} \right )} D^{\# \left ( \beta \zeta^{- 1} \right ) / 2} D^{\# \left ( \varepsilon \gamma^{- 1} \right ) / 2},
\end{align} those of the tensor \( T^{(\mathrm{b} \ell)} \in \mathbb{R}^{k! \times k! \times k!} \) in the bottom-left corner are given by \begin{align}
    \raisebox{-38pt}{\includegraphics{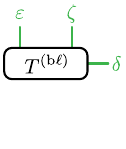}} & = \raisebox{-38pt}{\includegraphics{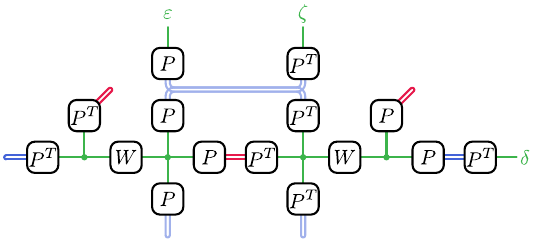}} \\
    & = \sum_{\alpha, \beta, \gamma, \sigma \in S_k} \Wg \left ( \beta \alpha^{- 1}, d D \right ) \Wg \left ( \sigma \gamma^{- 1}, d D \right ) \nonumber \\
    & \hphantom{{} = \sum_{\alpha, \beta, \gamma, \sigma \in S_k}} {} \times d^{\# (\alpha)} d^{\# (\sigma)} d^{\# \left ( \beta \gamma^{- 1} \right )} \nonumber \\
    & \hphantom{{} = \sum_{\alpha, \beta, \gamma, \sigma \in S_k}} {} \times D^{\# (\alpha)} D^{\# \left ( \sigma \delta^{- 1} \right )} D^{\# (\beta) / 2} D^{\# (\gamma) / 2} D^{\# \left ( \beta \zeta^{- 1} \right ) / 2} D^{\# \left ( \varepsilon \gamma^{- 1} \right ) / 2},
\end{align} those of the tensor \( T^{(\mathrm{b})} \in \mathbb{R}^{k! \times k! \times k! \times k!} \) on the bottom boundary are given by \begin{align}
    \raisebox{-38pt}{\includegraphics{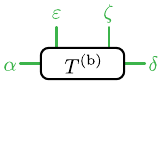}} & = \raisebox{-38pt}{\includegraphics{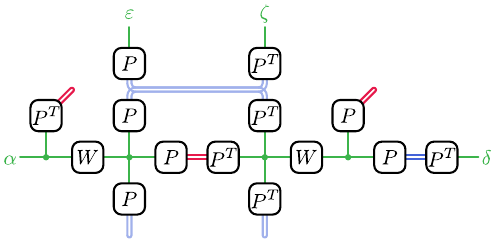}} \\
    & = \sum_{\beta, \gamma, \sigma \in S_k} \Wg \left ( \beta \alpha^{- 1}, d D \right ) \Wg \left ( \sigma \gamma^{- 1}, d D \right ) \nonumber \\
    & \hphantom{{} = \sum_{\beta, \gamma, \sigma \in S_k}} {} \times d^{\# (\alpha)} d^{\# (\sigma)} d^{\# \left ( \beta \gamma^{- 1} \right )} \nonumber \\
    & \hphantom{{} = \sum_{\beta, \gamma, \sigma \in S_k}} {} \times D^{\# \left ( \sigma \delta^{- 1} \right )} D^{\# (\beta) / 2} D^{\# (\gamma) / 2} D^{\# \left ( \beta \zeta^{- 1} \right ) / 2} D^{\# \left ( \varepsilon \gamma^{- 1} \right ) / 2},
\end{align} those of the tensor \( T^{(\mathrm{br})} \in \mathbb{R}^{k! \times k! \times k!} \) in the bottom-right corner are given by \begin{align}
    \raisebox{-38pt}{\includegraphics{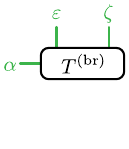}} & = \raisebox{-38pt}{\includegraphics{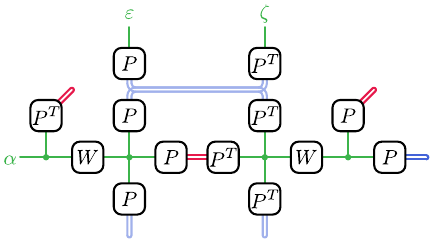}} \\
    & = \sum_{\beta, \gamma, \sigma \in S_k} \Wg \left ( \beta \alpha^{- 1}, d D \right ) \Wg \left ( \sigma \gamma^{- 1}, d D \right ) \nonumber \\
    & \hphantom{{} = \sum_{\beta, \gamma, \sigma \in S_k}} {} \times d^{\# (\alpha)} d^{\# (\sigma)} d^{\# \left ( \beta \gamma^{- 1} \right )} \nonumber \\
    & \hphantom{{} = \sum_{\beta, \gamma, \sigma \in S_k}} {} \times D^{\# (\sigma)} D^{\# (\beta) / 2} D^{\# (\gamma) / 2} D^{\# \left ( \beta \zeta^{- 1} \right ) / 2} D^{\# \left ( \varepsilon \gamma^{- 1} \right ) / 2},
\end{align} those of the tensor \( T^{(\mathrm{r})} \in \mathbb{R}^{k! \times k! \times k! \times k! \times k!} \) on the right boundary are given by \begin{align}
    \raisebox{-28pt}{\includegraphics{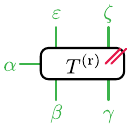}} & = \raisebox{-28pt}{\includegraphics{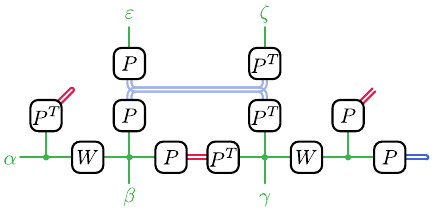}} \\
    & = \sum_{\sigma \in S_k} \Wg \left ( \beta \alpha^{- 1}, d D \right ) \Wg \left ( \sigma \gamma^{- 1}, d D \right ) \nonumber \\
    & \hphantom{{} = \sum_{\sigma \in S_k}} {} \times P_\sigma^{(d)} \nonumber \\
    & \hphantom{{} = \sum_{\sigma \in S_k}} {} \times d^{\# (\alpha)} d^{\# \left ( \beta \gamma^{- 1} \right )} \nonumber \\
    & \hphantom{{} = \sum_{\sigma \in S_k}} {} \times D^{\# (\sigma)} D^{\# \left ( \beta \zeta^{- 1} \right ) / 2} D^{\# \left ( \varepsilon \gamma^{- 1} \right ) / 2} ,
\end{align} those of the tensor \( T^{(\mathrm{tr})} \in \mathbb{R}^{k! \times k! \times k!} \) in the top-right corner are given by \begin{align}
    \raisebox{-28pt}{\includegraphics{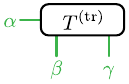}} & = \raisebox{-28pt}{\includegraphics{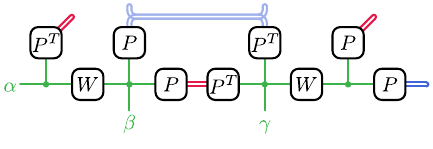}} \\
    & = \sum_{\sigma \in S_k} \Wg \left ( \beta \alpha^{- 1}, d D \right ) \Wg \left ( \sigma \gamma^{- 1}, d D \right ) \nonumber \\
    & \hphantom{{} = \sum_{\sigma \in S_k}} {} \times d^{\# (\alpha)} d^{\# (\sigma)} d^{\# \left ( \beta \gamma^{- 1} \right )} \nonumber \\
    & \hphantom{{} = \sum_{\sigma \in S_k}} {} \times D^{\# ( \sigma)} D^{\# (\beta) / 2} D^{\# (\gamma) / 2},
\end{align} those of the tensor \( T^{(\mathrm{t})} \in \mathbb{R}^{k! \times k! \times k! \times k!} \) on the top boundary are given by \begin{align}
    \raisebox{-28pt}{\includegraphics{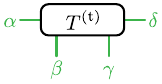}} & = \raisebox{-28pt}{\includegraphics{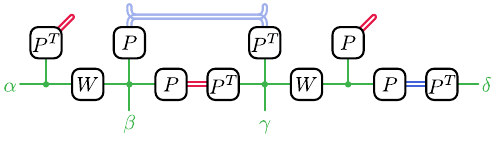}} \\
    & = \sum_{\sigma \in S_k} \Wg \left ( \beta \alpha^{- 1}, d D \right ) \Wg \left ( \sigma \gamma^{- 1}, d D \right ) \nonumber \\
    & \hphantom{{} = \sum_{\sigma \in S_k}} {} \times d^{\# (\alpha)} d^{\# (\sigma)} d^{\# \left ( \beta \gamma^{- 1} \right )} \nonumber \\
    & \hphantom{{} = \sum_{\sigma \in S_k}} {} \times D^{\# \left ( \sigma \delta^{- 1} \right )} D^{\# (\beta) / 2} D^{\# (\gamma) / 2},
\end{align} and those of the tensor \( T^{(\mathrm{t} \ell)} \in \mathbb{R}^{k! \times k! \times k!} \) in the top-right corner are given by \begin{align}
    \raisebox{-28pt}{\includegraphics{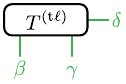}} & = \raisebox{-28pt}{\includegraphics{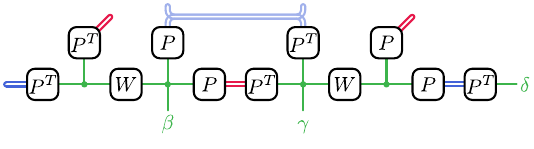}} \\
    & = \sum_{\alpha, \sigma \in S_k} \Wg \left ( \beta \alpha^{- 1}, d D \right ) \Wg \left ( \sigma \gamma^{- 1}, d D \right ) \nonumber \\
    & \hphantom{{} = \sum_{\alpha, \sigma \in S_k}} {} \times d^{\# (\alpha)} d^{\# (\sigma)} d^{\# \left ( \beta \gamma^{- 1} \right )} \nonumber \\
    & \hphantom{{} = \sum_{\alpha, \sigma \in S_k}} {} \times D^{\# (\alpha)} D^{\# \left ( \sigma \delta^{- 1} \right )} D^{\# (\beta) / 2} D^{\# (\gamma) / 2}.
\end{align}

\subsubsection{\texorpdfstring{Case of \( r = 4 t + 1 \)}{Special case}}

As stated in Eq.~\eqref{eq:peps_special_average}, for the case of odd \( x_1 \), even \( x_2 \), \( r_1 = 5 \), \( r_2 = 0 \) and \( t = 1 \), the average of the \( k \)th moment of \( D_1^{\alpha \beta} (x, r, t) \) is given by \begin{align}
    D_1^{\alpha \beta} (x, r, t) = \frac{1}{d^6 D^7} \vcenter{\hbox{\includegraphics{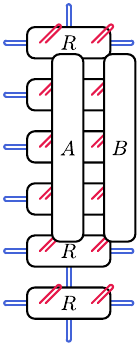}}} = \frac{1}{d^6 D^7} \vcenter{\hbox{\includegraphics{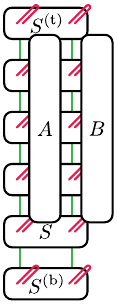}}} = \frac{1}{d^6 D^7} \vcenter{\hbox{\includegraphics{peps_special_average_rhs}}},
\end{align} where we have cut permutation-valued legs instead of bond legs in the second step.

Using Eq.~\eqref{eq:permutation_contraction}, the entries of the tensor \( T \in \mathbb{R}^{\times k! \times k! \times k! \times k!} \) in the bulk are given by \begin{align}
    \raisebox{-28pt}{\includegraphics{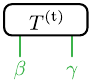}} & = \raisebox{-28pt}{\includegraphics{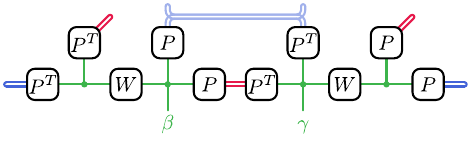}} \\
    & = \sum_{\alpha, \sigma \in S_k} \Wg \left ( \beta \alpha^{- 1}, d D \right ) \Wg \left ( \sigma \gamma^{- 1}, d D \right ) \nonumber \\
    & \hphantom{{} = \sum_{\alpha, \sigma \in S_k}} {} \times d^{\# (\alpha)} d^{\# (\sigma)} d^{\# \left ( \beta \gamma^{- 1} \right )} \nonumber \\
    & \hphantom{{} = \sum_{\alpha, \sigma \in S_k}} {} \times D^{\# (\alpha)} D^{\# (\sigma)} D^{\# (\beta) / 2} D^{\# (\gamma) / 2},
\end{align} those of the tensor \( T^{(\mathrm{t})} \in \mathbb{R}^{k! \times k!} \) on the top boundary are given by \begin{align}
    \raisebox{-28pt}{\includegraphics{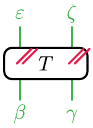}} & = \raisebox{-28pt}{\includegraphics{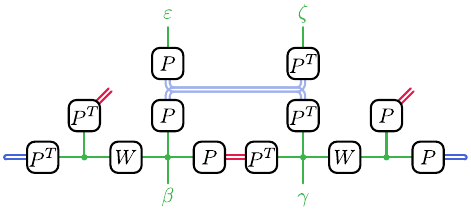}} \\
    & = \sum_{\alpha, \sigma \in S_k} \Wg \left ( \beta \alpha^{- 1}, d D \right ) \Wg \left ( \sigma \gamma^{- 1}, d D \right ) \nonumber \\
    & \hphantom{{} = \sum_{\alpha, \sigma \in S_k}} {} \times \left ( P_\alpha^{(d)} \right )^T P_\sigma^{(d)} \nonumber \\
    & \hphantom{{} = \sum_{\alpha, \sigma \in S_k}} {} \times d^{\# \left ( \beta \gamma^{- 1} \right )} \nonumber \\
    & \hphantom{{} = \sum_{\alpha, \sigma \in S_k}} {} \times D^{\# (\alpha)} D^{\# (\sigma)} D^{\# \left ( \beta \zeta^{- 1} \right ) / 2} D^{\# \left ( \varepsilon \gamma^{- 1} \right ) / 2},
\end{align} and those of the tensor \( T^{(\mathrm{b})} \in \mathbb{R}^{\times k! \times k!} \) on the bottom boundary are given by \begin{align}
    \raisebox{-38pt}{\includegraphics{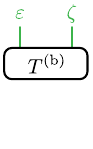}} & = \raisebox{-38pt}{\includegraphics{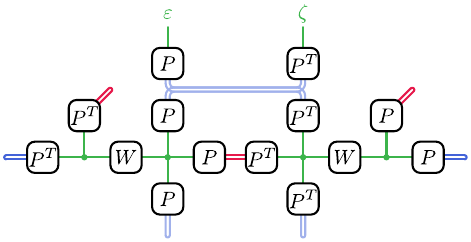}} \\
    & = \sum_{\alpha, \sigma \in S_k} \Wg \left ( \beta \alpha^{- 1}, d D \right ) \Wg \left ( \sigma \gamma^{- 1}, d D \right ) \nonumber \\
    & \hphantom{{} = \sum_{\alpha, \sigma \in S_k}} {} \times d^{\# (\alpha)} d^{\# (\sigma)} d^{\# \left ( \beta \gamma^{- 1} \right )} \nonumber \\
    & \hphantom{{} = \sum_{\alpha, \sigma \in S_k}} {} \times D^{\# (\alpha)} D^{\# (\sigma)} D^{\# (\beta) / 2} D^{\# (\gamma) / 2} D^{\# \left ( \beta \zeta^{- 1} \right ) / 2} D^{\# \left ( \varepsilon \gamma^{- 1} \right ) / 2}.
\end{align}

\subsection{Proof of Corollary~\ref{cor:peps_average_correlation_function}} \label{app:peps_average_correlation_function}

\pepsaveragecorrelationfunction*

\begin{proof}
    The proof is identical to that of Corollary~\ref{cor:mps_average_correlation_function}. That is, for \( k = 1 \), the Weingarten matrix \( W = G^{- 1} \) (see Sec.~\ref{sec:k-twirl}) has only one entry, namely \begin{align}
        \Wg (e, d D) = \frac{1}{d D}.
    \end{align} With that, one may confirm that \begin{align}
        \mathbb{E} D_1^{\alpha \beta} (x, r, t) = \mathbb{E} D_2^{\alpha \beta} (x, r, t) = \frac{1}{d^2} \raisebox{-50pt}{\includegraphics{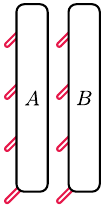}} = \frac{1}{d^2} \tr (A) \tr (B).
    \end{align} Using the spatial unitarity of \( \mathcal{U} \), it holds that \begin{align}
        & \tr (A) = \tr \left ( a^\alpha \right ), \\
        & \tr (B) = \tr \left ( a^\beta \right ).
    \end{align} The statement follows.
\end{proof}

\end{document}